\renewcommand\thesection{\Roman{section}} 
\renewcommand\thesubsection{\Roman{subsection}} 
\titleformat{\section}[block]{\large\scshape\centering}{\thesection.}{1em}{} 
\titleformat{\subsection}[block]{\large}{\thesubsection.}{1em}{} 
\newtheorem{theorem}{Theorem}[section]
\newtheorem{lemma}[theorem]{Lemma}
\newtheorem{corollary}[theorem]{Corollary}
\newtheorem{remark}[theorem]{Remark}
\numberwithin{equation}{section}
\newcommand{\given}{\,|\,}
\newcommand{\T}{\top}
\newcommand{\ba}{\mathbf{a}}
\newcommand{\bs}{\mathbf{s}}
\newcommand{\bS}{\mathbf{S}}
\newcommand{\calS}{\mathcal{S}}
\newcommand{\bC}{\mathbf{C}}
\newcommand{\bY}{\mathbf{Y}}
\newcommand{\by}{\mathbf{y}}
\newcommand{\bX}{\mathbf{X}}
\newcommand{\bx}{\mathbf{x}}
\newcommand{\bZ}{\mathbf{Z}}
\newcommand{\bz}{\mathbf{z}}
\newcommand{\bh}{\mathbf{h}}
\newcommand{\bV}{\mathbf{V}}
\newcommand{\bA}{\mathbf{A}}
\newcommand{\bF}{\mathbf{F}}
\newcommand{\fb}{\mathbf{f}}
\newcommand{\bu}{\mathbf{u}}
\newcommand{\bU}{\mathbf{U}}
\newcommand{\calU}{{\cal U}}
\newcommand{\bH}{\mathbf{H}}
\newcommand{\bM}{\mathbf{M}}
\newcommand{\calM}{{\cal M}}
\newcommand{\calR}{{\cal R}}
\newcommand{\bI}{\mathbf{I}}
\newcommand{\bD}{\mathbf{D}}
\newcommand{\bL}{\mathbf{L}}
\newcommand{\bR}{\mathbf{R}}
\newcommand{\bv}{\mathbf{v}}
\newcommand{\bP}{\mathbf{P}}
\newcommand{\bG}{\mathbf{G}}
\newcommand{\bmu}{\boldsymbol{\mu}}
\newcommand{\bSigma}{\boldsymbol{\Sigma}}
\newcommand{\beps}{\boldsymbol{\epsilon}}
\newcommand{\bbeta}{\boldsymbol{\beta}}
\newcommand{\bLambda}{\boldsymbol{\Lambda}}
\newcommand{\bomega}{\boldsymbol{\omega}}
\newcommand{\bOmega}{\boldsymbol{\Omega}}
\newcommand{\bgamma}{\boldsymbol{\gamma}}
\newcommand{\etab}{\boldsymbol{\eta}}
\newcommand{\bPsi}{\boldsymbol{\Psi}}
\newcommand{\brho}{\boldsymbol{\rho}}
\newcommand{\bkappa}{\boldsymbol{\mathcal{K}}}
\newcommand{\blambda}{\boldsymbol{\lambda}}
\newcommand{\bzero}{\mathbf{0}}
\title{\vspace{-15mm}\fontsize{24pt}{10pt}\selectfont\textbf{Spatial Factor Modeling: A Bayesian Matrix-Normal Approach for Misaligned Data}} 
\author{
\large
{\textsc{Lu Zhang}}\\[2mm]
{\textsc{UCLA Department of Biostatistics }}\\[2mm]
\normalsize \href{mailto:lu.zhang@ucla.edu}{Lu.Zhang@ucla.edu}\\[2mm] 
\large
{\textsc{Sudipto Banerjee}}\\[2mm]
{\textsc{UCLA Department of Biostatistics }}\\[2mm]
\normalsize \href{mailto:sudipto@ucla.edu}{sudipto@ucla.edu}\\[2mm] 
}
\date{June 1, 2020}
\providecommand{\keywords}[1]{\textbf{\textit{Key words:}} #1}
\begin{document}
\maketitle 
\thispagestyle{fancy} 

\label{firstpage}

\begin{abstract}
Multivariate spatially-oriented data sets are prevalent in the environmental and physical sciences. Scientists seek to jointly model multiple variables, each indexed by a spatial location, to capture any underlying spatial association for each variable and associations among the different dependent variables. Multivariate latent spatial process models have proved effective in driving statistical inference and rendering better predictive inference at arbitrary locations for the spatial process. High-dimensional multivariate spatial data, which is the theme of this article, refers to data sets where the number of spatial locations and the number of spatially dependent variables is very large. The field has witnessed substantial developments in scalable models for univariate spatial processes, but such methods for multivariate spatial processes, especially when the number of outcomes are moderately large, are limited in comparison. Here, we extend scalable modeling strategies for a single process to multivariate processes. We pursue Bayesian inference which is attractive for full uncertainty quantification of the latent spatial process. Our approach exploits distribution theory for the Matrix-Normal distribution, which we use to construct scalable versions of a hierarchical linear model of coregionalization (LMC) and spatial factor models that deliver inference over a high-dimensional parameter space including the latent spatial process. We illustrate the computational and inferential benefits of our algorithms over competing methods using simulation studies and an analysis of a massive vegetation index data set.
\end{abstract}

\keywords{Bayesian inference; Factor models; Linear Models of Coregionalization; Matrix-Normal distribution; Multivariate spatial processes; Scalable spatial modeling}
\newpage

\section{Introduction}\label{chp4sec: Intro}
{Statistical modeling for multiple spatially-oriented data are required to capture underlying spatial associations in each variable and accounting for inherent associations among the different variables. As an example, to which we return later, consider a set of spatially indexed spectral variables for vegetation activity on the land. Such variables exhibit strong spatial dependence as customarily exhibited through plots of spatial variograms and other exploratory maps. In addition, the variables are assumed to be associated with each other because of shared physical processes that manifest through the observations.}

{Modeling each variable separately captures the spatial distribution of that variable independent of other variables. Such analysis ignores associations among the variables and can impair prediction or interpolation \citep[see, e.g.,][]{chiles2009geostatistics, wackernagel2003, gelban10, cressie2015statistics}. Each of the aforementioned works provide ample evidence, theoretical and empirical, in favor of joint modeling of multiple spatially indexed variables. Joint modeling, or multivariate spatial analysis is especially pertinent in the presence of spatial misalignment, where not all variables have been observed over the same set of locations. For example, suppose $Y(\bs)$ is Normalized Difference Vegetation Index (NDVI) and $X(\bs)$ is red reflectance. If location $\bs_0$ has yielded a measurement for $X(\bs_0)$ but not for $Y(\bs_0)$, then optimal imputation of $Y(\bs_0)$ should proceed from $p(Y(\bs_0)\given \bY, \bX)$, where $\bY$ and $\bX$ comprise all measurements on $Y(\bs)$ and $X(\bs)$. If the processes $Y()$ and $X()$ are modeled as independent, then the predictive distribution $p(Y(\bs_0)\given \bY, \bX) = p(Y(\bs_0)\given \bY)$ and will not exploit the possible predictive information present in $X(\bs_0)$ for $Y(\bs_0)$. This specific issue has also been discussed, with examples, in \cite{bangel2002}.} 

Joint modeling is driven by vector-valued latent spatial stochastic processes, such as a multivariate Gaussian process. These are specified with matrix-valued cross-covariance functions \citep[see, e.g.,][and references therein]{genton2015cross, salvana2020nonstationary, le2006statistical} that models pairwise associations at distinct locations. Theoretical properties of cross-covariances are well established, but practical modeling implications and computational efficiency require specific considerations depending upon the application \citep[see, e.g. ][]{
le1997bayesian, sun1998assessment, le2001spatial, gamerman2004multivariate, schmidt2003bayesian, banerjee2014hierarchical}.     

High-dimensional multivariate spatial models will deal with a large number of dependent variables over a massive number of locations. While analyzing massive spatial and spatial-temporal databases have received attention \citep[see, e.g.,][]{sunligenton11,banerjee2017high,heaton2019case,zhang2020high}, the bulk of methods has focused on one or very few (two or three) spatially dependent variables and often have to rely upon restrictive assumptions that preclude full inference on the latent process. With larger numbers of dependent variables, modeling the cross-covariance becomes challenging. Even for stationary cross-covariance functions, where we assume that the associations among the variables do not change over space and the spatial association for each variable depends only on the difference of two positions, matters become computationally challenging. 
  
{This manuscript builds upon the popular linear models of coregionalization \citep{bourgault1991multivariable, goulard1992linear, wackernagel2003, gelnonstat, chiles2009geostatistics, genton2015cross}. Our contributions include: (i) developing a hierarchical model with a Matrix-Normal distribution as a prior for an unknown linear transformation on latent spatial processes; (ii) extending classes of spatial factor models for spatially misaligned data; (iii) accounting for multiple outcomes over very large number of locations.} Spatial factor models have been explored by \cite{Wang2003}, \cite{Lopes2008}, \cite{renbanerjee2013} and \cite{taylor2019spatial}. \cite{Lopes2008} provides an extensive discussion on how hierarchical models emerged from dynamic factor models. \cite{renbanerjee2013} proposed low-rank specifications for spatially-varying factors to achieve dimension reduction, but such low-rank specifications tend to over-smooth the latent process from massive data sets containing millions of locations. More recently, \cite{taylor2019spatial} consider Nearest-Neighbor Gaussian process \citep{datta16} for spatial factors with the usual constrained loading matrices in non-spatial factor models. These are more restrictive than needed for identifying spatially correlated factors \citep[see, e.g.][]{renbanerjee2013}.       

We develop our modeling framework in Section~\ref{sec: Multivariate_Spatial}. 
{Section~\ref{sec: Theory} presents some theoretical results about posterior consistency for the proposed models.} Simulation studies for exploring the performance of proposed models are summarized in Section~\ref{sec: simulation}. Section~\ref{sec: real_data_analy} presents an application to remote-sensed vegetation analysis on land surfaces. We conclude with some discussion in Section~\ref{sec: summary}. 

\section{Multivariate spatial processes}\label{sec: Multivariate_Spatial}
Let $\bz(\bs) = (z_1(\bs), \ldots, z_q(\bs))^\top$ be a $q \times 1$ stochastic process, where each $z_i(\bs)$ is a real-valued random variable at location $\bs \in \mathcal{D}\subseteq \Re^d$. The process is specified by its mean $\mbox{E}[z_i(\bs)]=\mu_i(\bs)$ and, customarily, second-order stationary covariances $C_{ij}(\bh)= \mbox{Cov}\{z_i(\bs),z_j(\bs+\bh)\}$ for $i,j=1,2,\ldots,q$. These covariances define the matrix-valued $q\times q$ cross-covariance function $\bC(\bh) = \{C_{ij}(\bh)\}$ with $(i,j)$-th entry $C_{ij}(\bh)$. While there is no loss of generality in assuming the process mean to be zero by absorbing the mean into a separate regression component in the model, as we will do here, modeling the cross-covariance function requires care. From its definition, $\bC(\bh)$ need not be symmetric, but must satisfy $\bC(\bh)^{\top} = \bC(-\bh)$. Also, since $\mbox{var}\{\sum_{i}^n\ba_i^{\top}\bz(\bs_i)\} \geq 0$ for any set of finite locations $\bs_1,\bs_2,\ldots,\bs_n\in \mathcal{D}$ and any set of constant vectors $\ba_1,\ba_2,\ldots,\ba_n\in \Re^q$, we have $\sum_{i,j=1}^n \ba_i^{\top}\bC(\bs_i-\bs_j)\ba_i \geq 0$. \cite{genton2015cross} provide a comprehensive review of cross-covariance functions. 

Perhaps the most widely used approach for constructing multivariate random fields is the linear model of coregionalization (LMC). {This hinges on invertible linear maps of independent spatial processes yielding valid spatial processes}. If $\fb(\bs) = (f_1(\bs),f_2(\bs),\ldots,f_K(\bs))^{\top}$ is a $K\times 1$ vector of independent spatial processes so that $\mbox{cov}\{f_i(\bs),f_j(\bs')\}=0$ for all $i\neq j$ and any two locations $\bs$ and $\bs'$ (same or distinct), then LMC \citep{bourgault1991multivariable} specifies 
\begin{equation}\label{eq: SLMC}
    \bz(\bs) = \sum_{k = 1}^K \blambda_k f_k(\bs) = \bLambda^\top \fb(\bs)\;,
\end{equation}
where {$\bz(\bs)$ is $q\times 1$}, $\bLambda$ is $K\times q$, $\blambda_k^{\top}$ is the $k$-th row of $\bLambda$ and each $f_k(\bs)$ is an independent Gaussian process with correlation function $\rho_{\psi_k}(\cdot, \cdot)$ with parameters $\psi_k$.
The cross-covariance for 
$\bz(\bs)$ 
yields non-degenerate process-realizations whenever 
 $K \geq q$ 
and $\bLambda$ is nonsingular. 
To achieve dimension reduction in the number of variables, we restrict $K < q$ so we have non-degenerate realizations in a $K$ dimensional sub-space. 

\cite{schmidt2003bayesian} propose multivariate spatial processes through a hierarchical spatial conditional model, whereupon $\bLambda^\top$ in (\ref{eq: SLMC}) is a $q \times q$ lower triangular matrix. 
Other variants of LMC \citep[e.g.][]{goulard1992linear} can also be recast as \eqref{eq: SLMC} using linear algebra. The flexibility offered in modeling $\bLambda$ is appealing and, in particular, can accrue computational benefits in high-dimensional settings. 
Hence, we build upon (\ref{eq: SLMC}). 

\subsection{A Bayesian LMC factor model (BLMC)}\label{subsec: Bayesian_SLMC}
Let $\by(\bs) = (y_1(\bs), \ldots, y_q(\bs))^\top \in \mathbb{R}^q$ denote the $q\times 1$ vector of dependent outcomes in location $\bs \in \mathcal{D} \subset \mathbb{R}^d$, $\bx(\bs) = (x_1(\bs), \ldots, x_p(\bs))^\top \in \mathbb{R}^p$ be the corresponding explanatory variables, and $\bbeta$ be a $p \times q$ regression coefficient matrix in the multivariate spatial model 
\begin{equation}\label{eq: SLMC_model_element}
    \by(\bs) = \bbeta^\top \bx(\bs) + \bLambda^\top \fb(\bs)+ \beps(\bs)\; , \; \bs \in \mathcal{D}\; ,
\end{equation}
where the latent process $\bLambda^\top \fb(\bs)$ is an LMC as described above.
Elements in $\fb(\bs)$ are 
as described in (\ref{eq: SLMC}), while the noise process $\beps(\bs) \overset{iid}{\sim} \mathrm{N}(\mathbf{0}, \bSigma)$ with covariance matrix $\bSigma$. We model $\{\bbeta, \bLambda, \bSigma\}$ using a Matrix-Normal-Inverse-Wishart family. To be precise,  
\begin{equation}\label{eq: LMC_priors}
    \begin{aligned}
    \bbeta \given \bSigma &\sim \mbox{MN}(\bmu_{\bbeta}, \bV_{\bbeta}, \bSigma)\; ; \;
    \bLambda \given \bSigma \sim \mbox{MN}(\bmu_{\bLambda}, \bV_{\bLambda}, \bSigma)\; ; \; \bSigma \sim \mbox{IW}(\bPsi, \nu)
    \end{aligned}\;,
\end{equation}
where $\bmu_{\bLambda}$ a $q \times K$ matrix and $\bV_{\bLambda}$ a $K \times K$ positive definite matrix. A random matrix $\mathbf{Z}_{n \times p}\sim \mbox{MN}_{n,p}(\mathbf{M}, \bU, \bV)$ has the probability density function \citep{dawid1981}
\begin{equation}\label{eq: MN_density}
p(\bZ\mid\bM, \bU, \bV) = \frac{\exp\left[ -\frac{1}{2} \, \mbox{tr}\left\{ \bV^{-1} (\bZ - \bM)^{T} \bU^{-1} (\bZ - \bM) \right\} \right]}{(2\pi)^{np/2} |\bV|^{n/2} |\bU|^{p/2}}\; , 
\end{equation}
where $\mbox{tr}(\cdot)$ is the trace function, $\bM$ is the mean matrix, $\bU$ is the first scale matrix with dimension $n \times n$ and $\bV$ is the second scale matrix with dimension $p \times p$. This distribution is equivalent to
$
  \mbox{vec}(\bZ) \sim \mbox{N}_{np}(\mbox{vec}(\bM), \bV \otimes \bU)\;,
$
where $\otimes$ is the Kronecker product and $\mbox{vec}(\bZ) = \left[\bz_1^\top, \ldots, \bz_p^\top\right]^\top$ is the vectorized $n \times p$ random matrix $\bZ = [\bz_1: \cdots : \bz_p]$. 
{We refer to the model specified through \eqref{eq: SLMC_model_element}--\eqref{eq: LMC_priors} as the Bayesian LMC (BLMC) factor model.}

Without misalignment, the observation model in \eqref{eq: SLMC_model_element} can be cast as
\begin{equation}\label{eq: LMC_model}
    \bY_{n \times q} = \bX_{n \times p}\bbeta_{p \times q} + \bF_{n \times K} \bLambda_{K \times q} + \beps_{n \times q}\;,
\end{equation}
where $\bY = \by(\mathcal{S}) = [\by(\bs_1): \cdots : \by(\bs_n)]^\top$ is the $n \times q$ response matrix, $\bX = \bx(\mathcal{S}) = [\bx(\bs_1) : \cdots : \bx(\bs_n)]^\top$ is the corresponding design matrix with full rank ($n > p$), and 
$\bF$ is the $n\times K$ matrix with $j$-th column being the $n\times 1$ vector comprising $f_j(\bs_i)$'s for $i=1,2,\ldots,n$.

{The parameters $\bLambda$ and $\bF$ are not jointly identified in factor models and some constraints are required to ensure identifiability \citep{Lopes2004,renbanerjee2013}. These constraints are not without problems. For example, a lower-trapezoidal (triangular for $K=q$) specification for $\bLambda$ imposes possibly unjustifiable conditional independence on the spatial processes. Alternatively, ordering the spatial range parameters can ensure identifiability but creates difficulties in computation and interpretation. We avoid such constraints and transform $\bomega = \bF\bLambda = [\bomega(\bs_1) : \cdots : \bomega(\bs_n)]^\top$ to obtain inference for the latent process. This parametrization yields conditional conjugate distributions and, therefore, efficient posterior sampling. We elucidate below in the context of misaligned data. 
}

{\subsection{Inference for spatially misaligned data}\label{subsec: Bayesian_SLMC_Misaligned}}
{Let $\calS = \{\bs_1, \ldots, \bs_n\}$ be the set of locations that have recorded at least one of the observed outcomes and let $\calS_i$ be the subset of locations that have recorded the $i$-th response. Then $\cup_{i = 1}^q\calS_i = \calS$ and let $n_i = |\calS_i|$.  Let $\calM_i = \calS \setminus \calS_i$ denote the set of locations where at least one response, but not the $i$th response, is recorded so that $\calM = \cup_{i = 1}^q\calM_i$ is the set of all locations with incomplete data.
} 
We derive the conditional distribution of $\bF$ and of the unobserved responses $\{y_i(\calM_i)\}_{i = 1}^q$ conditional on $\{\bbeta, \bLambda, \bSigma, \{\psi_k\}_{k = 1}^K\}$. Let 
$\mathbf{P}$ be the 
{ $(\sum_{i = 1}^q n_i) \times nq$ matrix such that $\mathbf{P}\mbox{vec}(\bY) = [\by(\bs_1)^{\T}_{os_1}, \by(\bs_2)^{\T}_{os_2},\ldots,\by(\bs_n)^{\T}_{os_n}]^{\top}$},
where the suffix $os_i$ indexes of the observed responses at $\bs_i \in \calS$. Thus, $\mathbf{P}$ extracts the observed responses from $\mbox{vec}(\bY)$ in each of the locations $\{\bs_1, \ldots, \bs_n\}$. The joint distribution of $\mathrm{vec}(\bF)$ and $\{\by(\bs_i)_{os_i}\}_{i = 1}^n$, given $\{\bbeta, \bLambda, \bSigma, \{\psi_k\}_{k = 1}^K\}$, can be represented through the augmented linear system, 
\begin{equation}
\begin{array}{c}
\left[ \begin{array}{c} \{(\by(\bs_i) - \bx(\bs_i)^\top \bbeta)_{os_i}\}_{i = 1}^n\\ \bzero \end{array} \right]
=
\left[ \begin{array}{c} \mathbf{P} (\bLambda^\top \otimes \bI_n)\\ \bI_K \otimes \bI_n \end{array} \right] \mbox{vec}(\bF)
+ \left[ \begin{array}{c}  \beps_1 \\ \beps_2 \end{array} \right] \, ,
\end{array}
\end{equation}
where $\beps_1 \sim \mathrm{N}(\mathbf{0},\, \oplus_{i=1}^n \{\bSigma_{os_i}\})$, $\beps_2 \sim \mathrm{N}(\mathbf{0},\, \oplus_{k=1}^K \{\brho_{\psi_k}(\calS, \calS)\})$, $\brho_{\psi_k}(\calS,\calS)$ is the $n\times n$ spatial correlation matrix corresponding to $\fb_k = (f_k(\bs_1),f_k(\bs_2),\ldots,f_k(\bs_n))^{\top}$, and $\oplus_{i=1}^n$ represents the block diagonal operator stacking matrices along the diagonal. Letting $\bD_{\bSigma_o}^{-\frac{1}{2}} = \oplus_{i=1}^n\{\bSigma_{os_i}^{-\frac{1}{2}}\}$ and $\bV_{\bF} = \oplus_{k=1}^K\{\bV_k\}$, where $\brho_{\psi_k}^{-1}(\calS, \calS) = \bV_k^\T\bV_k$, we obtain
\begin{equation}\label{eq: argument_liner_LMC}
\begin{array}{c}
\underbrace{\left[ \begin{array}{c} \bD_{\bSigma_o}^{-\frac{1}{2}}\{(\by(\bs_i) - \bx(\bs_i)^\top \bbeta)_{os_i}\}_{i = 1}^n\\ \bzero \end{array} \right]}_{\Tilde{\bY}}
=
\underbrace{\left[ \begin{array}{c} \bD_{\bSigma_o}^{-\frac{1}{2}}\mathbf{P}\bLambda^\top \otimes \bI_n\\ \bV_{\bF} \end{array} \right] }_{\Tilde{\bX}}
\mbox{vec}(\bF)
+ \underbrace{\left[ \begin{array}{c}  \etab_1 \\ \etab_2 \end{array} \right]}_{\Tilde{\etab}}
\end{array}\, .
\end{equation}
The elements of $\Tilde{\etab}$ are independent error terms, each with unit variance. The full conditional distribution $\mbox{vec}(\bF) \given \{\by(\bs_i)_{os_i}\}_{i = 1}^n, \bbeta, \bLambda, \bSigma, \{\psi_k\}_{k = 1}^K$ for the LMC model in (\ref{eq: SLMC_model_element}) then follows
\begin{equation}\label{eq: SLMC_F_cond_post}
    \mbox{vec}(\bF) \given \{\by(\bs_i)_{os_i}\}_{i = 1}^n, \bbeta, \bLambda, \bSigma, \{\psi_k\}_{k = 1}^K \sim \mathrm{N}((\Tilde{\bX}^\T\Tilde{\bX})^{-1}\Tilde{\bX}^\T\Tilde{\bY}, \, (\Tilde{\bX}^\T\Tilde{\bX})^{-1}).
\end{equation}

{For misaligned data, we will perform Bayesian updating of the outcomes missing at a location $\bs\in {\cal M}$. Let $ms$ be the suffix that indexes outcomes that are missing at $\bs \in \calM$.} 
The conditional distribution of $\by(\bs)_{ms}$ given the parameters $\left\{\bF, \{\by(\bs_i)_{os_i}\}_{i = 1}^n, \bbeta, \bLambda, \bSigma\right\}$ is
\begin{equation}\label{eq: SLMC_YM_cond_post}
\mbox{N}([\bmu_{\bs}]_{ms} + \bSigma_{[ms, os]}\bSigma_{[os, os]}^{-1}(\by(\bs)_{os} - [\bmu_{\bs}]_{os}), \bSigma_{[ms, ms]} - \bSigma_{[ms, os]}\bSigma_{[os, os]}^{-1}\bSigma_{[os, ms]})\; ,
\end{equation}
where $\bmu_{\bs} = \bbeta^\top \bx(\bs) + \bLambda^\top \fb(\bs)$, $\bSigma_{[ms, os]}$ is the sub-matrix of $\bSigma$ extracted with row and column indices $ms$ and $os$, respectively
. 
With the priors given in \eqref{eq: LMC_priors}, we let $\bV_\Lambda = \bL_\Lambda\bL_\Lambda^\T$ and define $\bgamma = [\bbeta^\top, \bLambda^\top]^\top$. The conditional posterior distribution $\bgamma \given \bSigma, \bF, \bY$ can be found from
\begin{equation}\label{eq: augment_linear_LMC}
\begin{array}{c}
\underbrace{ \left[ \begin{array}{c} \bY\\ \bL_{\bbeta}^{-1} \bmu_{\bbeta} \\ \bL_{\bLambda}^{-1} \bmu_{\bLambda} \end{array} \right]}_{\bY^\ast}
= \underbrace{ \left[ \begin{array}{cc} \bX& \bF \\ \bL_{\bbeta}^{-1}& \mathbf{0} \\  \mathbf{0}& \bL_{\bLambda}^{-1} \end{array} \right] }_{\bX^{\ast}}  \underbrace{\left[ \begin{array}{c} \bbeta \\ \bLambda \end{array} \right]}_{\bgamma}+
\underbrace{ \left[ \begin{array}{c} \etab_1 \\ \etab_2 \\ \etab_3 \end{array} \right]}_{\etab^\ast}\; ,
\end{array}
\end{equation}
where $\etab^\ast \sim \mbox{MN}(\mathbf{0}_{(n + p + K) \times q}, \bI_{n + p + K}, \bSigma)$. Using standard distribution theory, we can show that $\bgamma, \bSigma \given \bF, \bY$ follows $\mbox{MNIW}(\bmu^\ast, \bV^\ast, \bPsi^\ast, \nu^\ast)$, where
\begin{equation}\label{eq: SLMC_MNIW}
\begin{aligned}
    \bV^\ast = [\bX^{\ast\top}\bX^{\ast}]^{-1}, \; \bmu^\ast = \bV^\ast[\bX^{\ast\top}\bY^{\ast}],\;
    \bPsi^\ast = \bPsi + \bS^{\ast}
,\;\mbox{ and }\;
    \nu^\ast = \nu + n
\end{aligned}
\end{equation}
with $\bS^{\ast} = (\bY^{\ast} - \bX^{\ast}\bmu^\ast)^{\top}(\bY^{\ast} - \bX^{\ast}\bmu^\ast)$.
In particular, if $\bSigma = \oplus_{i=1}^q\{\sigma^2_i\}$ {and each $\sigma^2_i \sim \mbox{IG}(a, b_{i})$} for $i = 1, \ldots q$, then the conditional distribution of $\sigma^2_i$ given $\bY, \bF$ follows $\mbox{IG}(a^\ast, b_i^\ast)$, where 
\begin{equation}\label{eq: Diag_Sigma_LMC_post}
    a^\ast = a + \frac{n}{2}\; , \; b_i^\ast = b_i + \frac{1}{2} (\bY^\ast - \bX^\ast\bmu^\ast)_i^\top (\bY^\ast - \bX^\ast\bmu^\ast)_i\;, \; i = 1, \ldots, q\;,
\end{equation}
and $(\bY^\ast - \bX^\ast\bmu^\ast)_i$ is the $i$-th column of $\bY^\ast - \bX^\ast\bmu^\ast$. From \eqref{eq: augment_linear_LMC}, $ 
\bgamma \given 
\bSigma, \bF, \bY \sim \mbox{MN}(\bmu^\ast, \bV^\ast, \bSigma)$. 

{The parameters $\psi_{k}$, $k=1,2,\ldots,K$, by themselves, are not consistently estimable under in-fill asymptotics. Therefore, irrespective of the sample size (within a fixed domain), inference on $\psi_{k}$ will be sensitive to the choice of the prior. Furthermore, without placing restrictions on the loading matrix or ordering these parameters \citep{renbanerjee2013}, these parameters are identifiable primarily through the prior. We treat these as unknown and model them using priors based upon customary spatial domain considerations.} The full conditional distributions for $\{\psi_k\}_{k = 1}^K$ are not available in closed form. However, since $\{\psi_k\}_{k = 1}^K$ and $\bY$ are conditionally independent given $\{\bF, \bgamma, \bSigma\}$, and $\fb_k$ are independent for $k=1,2,\ldots,K$, we obtain $p(\psi_k \given \bF, \bY, \bgamma, \bSigma, \{\psi_j\}_{j \neq k})$ up to a proportionality constant as
\begin{equation}\label{eq: SLMC_psi_cond_post}
 p(\bY \given \bF, \bgamma, \bSigma)\times p(\bgamma, \bSigma) \times \prod_{k = 1}^Kp(\fb_k \given \psi_k) \times p(\psi_k) \propto p(\fb_k \given \psi_k) \times p(\psi_k)\;,
\end{equation}
for each $k = 1,\ldots,K$, where $p(\psi_k)$ is the prior for $\psi_k$. 

Turning to predictions, if $\calU = \{\bu_1, \ldots, \bu_{n'}\}$ is a set of new locations, then $\bY_\calU = \by(\calU)$ is independent of $\{\by(\bs_i)_{os_i}\}_{i = 1}^n$ given $\{\bbeta, \bLambda, \bSigma\}$ and $\bF_{\calU} = [\fb_1(\calU): \ldots : \fb_K(\calU)]^\top$. Then,
\begin{equation}\label{eq: FU_LMC_post}
  \fb_k(\calU) \given \fb_k, \psi_k \sim \mbox{N}(\brho_{\psi_k}(\calU, \calS) \brho^{-1}_{\psi_k}(\calS, \calS)\fb_k, \, \brho_{\psi_k}(\calU, \calS) \brho^{-1}_{\psi_k}(\calS, \calS)\brho_{\psi_k}(\calS, \calU)) \; ,
\end{equation}
for each $k=1,2,\ldots,K$. 
It follows that $p(\bY_\calU, \bF_\calU \given  \{\by(\bs_i)_{oi}\}_{i = 1}^n)$ is proportional to
\begin{equation}\label{eq: post_pred_SLMC}
\begin{aligned}
    p(\bY_\calU \given  \bF_\calU, \bbeta, \bLambda, \bSigma) \times p(\bF_\calU \given \bF, \{\psi_k\}_{k = 1}^K) \times p(\bbeta, \bLambda, \bSigma , \bF, \{\psi_k\}_{k = 1}^K  \given \{\by(\bs_i)_{os_i}\}_{i = 1}^n)\; ,
\end{aligned}
\end{equation}
where we have used the independence between $\bF_\calU$ and $\{\by(\bs_i)_{os_i}\}_{i = 1}^n$ given $\bF$ and $\{\psi_k\}_{k = 1}^K$. The distributions in \eqref{eq: FU_LMC_post} and \eqref{eq: post_pred_SLMC} help in sampling from the posterior predictive distribution over $\calU$ using the posterior samples of $\left\{\bbeta, \bLambda, \bSigma, \bF, \{\psi_k\}_{k = 1}^K\right\}$. We elaborate below.

\subsection{The block update MCMC algorithm}\label{subsec: Block_MCMC}
We formulate an efficient MCMC algorithm for obtaining full Bayesian inference as follows. From the $l$th iteration with $\{\bbeta^{(l)}, \bLambda^{(l)}, \bSigma^{(l)}, \{\psi_k^{(l)}\}_{k = 1}^K\}$, we generate $\bF^{(l + 1)}$ from (\ref{eq: SLMC_F_cond_post}). {Next, we draw $\{\by(\bs_i)_{mi}^{(l + 1)}\}_{\bs_i\in \calM}$ on $\calM$ using (\ref{eq: SLMC_YM_cond_post})} and then update $\{\bbeta^{(l + 1)}, \bLambda^{(l + 1)}, \bSigma^{(l + 1)}\}$ using (\ref{eq: SLMC_MNIW}). We complete the $(l+1)$th iteration by drawing $\{\psi_k^{(l + 1)}\}_{k = 1}^{K}$ through a Metropolis random walk using (\ref{eq: SLMC_psi_cond_post}). Upon convergence, these iterations will generate samples from the desired joint posterior distribution $p(\bbeta,\bLambda, \bF, \bSigma,\{\psi_k\}_{k=1}^K\given \bY)$.  

For inference on $\calU$, we sample $\bF_\calU$ from \eqref{eq: FU_LMC_post}, given the posterior samples of $\bF$ and $\{\psi_k\}_{k = 1}^K$, then generate posterior predictions of $\bY_\calU$ given the posterior samples of $\{\bbeta, \bLambda, \bSigma, \bF_\calU\}$.
Applying the SCAM algorithm introduced in \citet{haario2005componentwise}, one can avoid tuning parameters in Metropolis algorithm by warming up each MCMC chain of $\{\psi_k\}_{k = 1}^K$ with an adaptive proposal distribution. 
In our implementation, we use the proposal distribution defined by equation (2.1) in \citet{roberts2009examples}
. 

We sample $\bF$ as a single block through a linear transformation of the $n \times K$ independent parameters from the model in \eqref{eq: argument_liner_LMC}. Sampling $\{\bbeta, \bLambda\}$ follows analogously. We significantly improve convergence by reducing the posterior dependence among the parameter in this Gibbs with Metropolis algorithm \citep{gelman2013}. 
Since $\bF$ is sensitive to the value of the intercept, we recommend using an intercept-centered latent process to obtain inference for the latent spatial pattern. 

\subsection{Scalable Modeling}\label{sec: model_implement}
We use a conjugate gradient method \citep{nishimura2018prior} to facilitate sampling of $\bF$ when $\brho_{\psi_k}^{-1}(\calS, \calS)$ is sparse for $k = 1, \ldots, K$. 
Here, we develop a scalable BLMC model with each element of $\fb(\bs)$ modeled as a Nearest-Neighbor Gaussian Process (NNGP).

Let each $f_k(\bs), \bs \in \mathcal{D}$ be an $\mbox{NNGP}(0, \rho_{\psi_k}(\cdot, \cdot))$, which implies that $\fb_k \sim \mbox{N}(\bzero, \Tilde{\brho}_k)$ for each $k=1,2,\ldots,K$, where $\Tilde{\brho}_k = (\bI - \bA_{\rho_k})^{-1}\bD_{\rho_k}(\bI - \bA_{\rho_k})^{- \top}$, $\bA_{\rho_k}$ is a sparse-lower triangular matrix with no more than a specified small number, $m$, of nonzero entries in each row and $\bD_{\rho_k}$ is a diagonal matrix. The diagonal entries of $\bD_{\rho_k}$ and the nonzero entries of $\bA_{\rho_k}$ are obtained from the conditional variance and conditional expectations for a Gaussian process with covariance function $\rho_{\psi_k}(\bs,\bs')$. We consider a fixed order of locations in $\calS$ and let $N_m(\bs_i)$ be the set of at most $m$ neighbors of $\bs_i$ among locations $\bs_j\in \calS$ such that $j<i$. The $(i,j)$-th entry of $\bA_{\rho_k}$ is $0$ whenever $\bs_j \notin N_m(\bs_i)$. If $j_1 < j_2 < \cdots < j_m$ are the $m$ column indices for the nonzero entries in the $i$-th row of $\bA_{\rho_k}$, then the $(i,j_k)$-th element of $\bA_{\rho_k}$ is the $k$-th element of the $1\times m$ vector $\ba_i^{\top} = \brho_{\psi_k}(\bs_i, N_m(\bs_i))\brho_{\psi}(N_m(\bs_i),N_m(\bs_i))^{-1}$. The $(i,i)$-th diagonal element of $\bD_{\rho_k}$ is given by $\rho_{\psi_k}(\bs_i,\bs_i) - \ba_i^{\top}\brho_{\psi_k}(N_m(\bs_i),\bs_i)$. Repeating these calculations for each row completes the construction of $\bA_{\rho_k}$ and $\bD_{\rho_k}$ and yields a sparse $\Tilde{\brho}_k^{-1}$. This construction is performed in parallel and requires storage or computation of at most $m\times m$ matrices, where $m << n$, costing $\mathcal{O}(n)$ flops and storage. {
See Appendix~\ref{SM: BSLMC_NNGP_alg} for details.}



{
Sampling $\bF$ is computationally expensive, but is expedited by solving $(\tilde{\bX}^\top\tilde{\bX})^{-1}\tilde{\bX}^\top \bv$ efficiently for any vector $\bv$. If $\brho_{\psi_k}(\calS, \calS)= \bL_k\bL_k^\top $ has a sparse Cholesky factor $\bL_k$, then 
calculating $\tilde{\bX}^\top \bv$ is efficient. To be precise, the Woodbury matrix identity yields
\begin{equation}\label{eq: woodbury}
    (\tilde{\bX}^\top\tilde{\bX})^{-1} = (\bF \bD_{\bSigma_0}^{-1}\bF^\top + \oplus_{k=1}^K \{\brho_k^{-1}\})^{-1} = \oplus_{k=1}^K \{\brho_k\} - \oplus_{k=1}^K \{\brho_k\}\bF \bG^{-1}\bF^{\top}  \oplus_{k=1}^K \{\brho_k\}\;,
\end{equation}
where $\bF = \left(\bLambda \otimes \bI_n\right) \bP^{\top}$ is sparse, $\bG = \bD_{\bSigma_0} + \bP\{\sum_{k = 1}^K \lambda_{ik}\lambda_{jk}\brho_k \}_{i,j = 1}^{p}\bP^\top$ with $\brho_k = \brho_{\psi_k}(\calS, \calS)$. If all the $\brho_k$'s have similar structures, then permuting $\{\sum_{k = 1}^K \lambda_{ik}\lambda_{jk}\brho_k \}_{i,j = 1}^{p}$ with $\bP$ in rows and columns often renders structures in $\brho_k$'s that can be exploited by BLMC for very large spatial data sets  
. 
For example, if $\brho_k$'s are banded matrices with bandwidth $b$, then $\bP\{\sum_{k = 1}^K \lambda_{ik}\lambda_{jk}\brho_k \}_{i,j = 1}^{p}\bP^\top$ is also banded with bandwidth $bq$. Moreover, $\bD_{\bSigma_0}$ is a banded matrix with bandwidth $\leq q$. Hence, adding $\bD_{\bSigma_0}$ hardly increases the computational burden in the Cholesky decomposition of $\bG$ when $q$ is small. Assembling all features of $\brho_k$, $\bF$ and $\bG$, the calculation of $(\tilde{\bX}^\top\tilde{\bX})^{-1}\bu$ for any $\bu = \tilde{\bX}^\top\bv$ is scalable when multiplying $\bu$ with \eqref{eq: woodbury}.
}

We conclude this section with a remark on the BLMC model with diagonal $\bSigma$. This specification is desirable for data sets with a massive number of responses $q$. A diagonal $\bSigma$ avoids the quadratic growth of the number of parameters in $\bSigma$ as $q$ increases. 
We illustrate an NNGP based BLMC with diagonal $\bSigma$ in Section~\ref{subsec: sim_2}.

\section{On posterior consistency: Large-sample properties of posterior estimates}\label{sec: Theory}
We present some theoretical results for the models constructed in the previous section. Specifically, we investigate the behavior of the posterior distribution as the sample size increases and establish its convergence to an oracle distribution. Here, for establishing the results, we will assume conjugate MNIW models with no misalignment. First, we assume that $\by(\bs)$ itself is modeled as a spatial process without explicitly introducing a latent process. Let
\begin{equation}\label{eq: spatial_GP_model_proportion}
\by(\bs) \sim \mbox{GP}(\bbeta^\top \bx(\bs), \bC(\cdot, \cdot)) \text{ , } \bC(\bs, \bs') = \{\rho_{\psi}(\bs, \bs') + (\alpha^{-1} - 1)\delta_{\bs = \bs'}\}\bSigma\;,
\end{equation}
where $\rho_{\psi}(\cdot, \cdot)$ is a spatial correlation function defined through hyperparameter $\psi$, $\delta$ denotes Dirac's delta function, and $\alpha^{-1}\bSigma$ is the non-spatial covariance matrix of $\by(\bs)$. The fixed scalar $\alpha$ represents the proportion of total variability allocated to the spatial process. 
This implies that $\bY \given \bbeta, \bSigma \sim \mbox{MN}_{n, q}(\bX\bbeta, \bkappa,  \bSigma)$, where $\bkappa = \brho_{\psi}(\mathcal{S}, \mathcal{S}) + (\alpha^{-1} - 1)\bI_n$. We model $\{\bbeta, \bSigma\}$ using the conjugate MNIW prior
\begin{equation}\label{eq: MNIW prior chp4}
    \bbeta \given \bSigma \sim \mbox{MN}_{p, q}(\bmu_{\bbeta} , \bV_r, \bSigma)\;, \; 
    \bSigma \sim \mbox{IW}(\bPsi, \nu)\; ,
\end{equation}
with prefixed $\{\bmu_{\bbeta}, \bV_r, \bPsi, \nu\}$. Closely following the developments in \citet{gamerman2004multivariate}, we obtain the posterior distribution of $\{\bbeta, \bSigma\}$ as $\text{MNIW}(\bmu^\ast, \bV^\ast, \mathbf{\Psi}^\ast, \nu^\ast)$, where
\begin{equation}\label{eq: collapsed_spatial_pars1}
\begin{aligned}
    \bV^\ast &= (\bX^\top \bkappa^{-1} \bX + \bV_r^{-1})^{-1}\; , \; \bmu^\ast = \bV^\ast (\bX^\top \bkappa^{-1}\bY + \bV_r^{-1}\bmu_{\bbeta})\; ,\\
    \bPsi^\ast &= \bPsi + \bY^\top \bkappa^{-1}\bY
    +\bmu_{\bbeta}^\top \bV_r^{-1} \bmu_{\bbeta} - \bmu^{\ast\top} \bV^{\ast-1} \bmu^\ast\; , \mbox{ and  }\nu^\ast = \nu + n\;.
\end{aligned}  
\end{equation}
We refer to the above model as the ``response'' model.

Next, we consider the spatial regression model with the latent process,   
\begin{equation}\label{eq: conj_latent_model}
 \by(\bs) = \bbeta^\top\bx(\bs) + \bomega(\bs) + \beps(\bs)\;, \; \bs \in \mathcal{D}\;,
\end{equation}
where $\bomega(\bs) \sim \mbox{GP}(\mathbf{0}_{q \times 1}, \rho_{\psi}(\cdot, \cdot)\bSigma)$ is a latent process and $\beps(\bs) \sim \mbox{N}(\mathbf{0}_{q \times 1}, (\alpha^{-1} - 1)\bSigma)$ is measurement error. Define $\bomega = \bomega(\calS) = [\bomega(\bs_1): \cdots : \bomega(\bs_n)]^\top$. For theoretical tractability, we restrict posterior inference on $\{\bbeta, \bomega, \bSigma\}$, assuming that the scalar $\alpha$ is fixed. Assuming that the joint distribution of $\bbeta$ and $\bSigma$ are given in (\ref{eq: LMC_priors}) and that $\bomega \given \bSigma \sim \mbox{MN}_{n \times q}(\mathbf{0}, \brho_{\psi}(\calS,\calS), \bSigma)$, the posterior distribution of $\bgamma^{\top} = [\bbeta^{\top}, \bomega^{\top}]$ is $p(\bgamma, \bSigma \given \bY) = \mbox{MNIW}(\bmu_{\bgamma}^\ast, \bV^\ast, \bPsi^\ast, \nu^\ast)$, where 
\begin{equation}\label{eq: augmented_conj_post_v1_pars}
\begin{aligned}
    \bV^{\ast} &= \left[\begin{array}{cc} \frac{\alpha}{1 - \alpha} \bX^\top \bX + \bV_r^{-1} & \frac{\alpha}{1 - \alpha}\bX^\top \\ \frac{\alpha}{1 - \alpha}\bX & \brho_{\psi}^{-1}(\calS, \calS) + \frac{\alpha}{1 - \alpha}\bI_n \end{array} \right]^{-1}, \bmu_{\bgamma}^\ast = \bV^{\ast} \left[\begin{array}{c} \frac{\alpha}{1 - \alpha}\bX^\top \bY + \bV_r^{-1}\bmu_{\bbeta} \\ \frac{\alpha}{1 - \alpha} \bY \end{array} \right],\\
    \bPsi^\ast &= \bPsi + \frac{\alpha}{1 - \alpha} \bY^\top\bY + \bmu_{\bbeta}^\top \bV_r^{-1}\bmu_{\bbeta} - \bmu_{\bgamma}^{\ast\top}\bV^{\ast-1} \bmu_{\bgamma}^\ast \; \mbox{  and  } \nu^\ast = \nu + n\;,
\end{aligned}
\end{equation}
We refer to the above model as the ``latent'' model.

We establish the posterior consistency of $\{\bbeta,\bSigma\}$ for the response model (\ref{eq: spatial_GP_model_proportion}) and the latent model (\ref{eq: conj_latent_model}). For distinguishing the variables based on the number of observations, we make the dependence upon $n$ explicit. Denote $\bX(n)_{n \times p} = [\bx(\bs_1): \cdots: \bx(\bs_n)]^\top$, $\bY(n)_{n \times q} = [\by(\bs_1): \cdots: \by(\bs_n)]^\top$,  $\calS(n) = \{\bs_1, \ldots, \bs_n\}$, $\bkappa(n) = \bC(\calS(n), \calS(n)) + (\alpha^{-1} - 1)\bI_n$ and $\mathbf{J}(n) =\bX(n)^\top \bkappa(n)^{-1}\bX(n)$. 
Proofs and technical details are available in Appendix~\ref{SM: proofs}.

\begin{theorem}{~\rm [Theorem~S.1, Theorem S.2]}\label{thm1}
Parameter set $\{\bbeta, \bSigma\}$ is posterior consistent for both conjugate response and latent models if and only if
$\mbox{lim}_{n \to \infty}\lambda_{\min}\{\mathbf{J}(n)\} = \infty$, where $\lambda_{\min}\{\mathbf{J}(n)\}$ is the smallest eigenvalue of $\mathbf{J}(n)$. 
\end{theorem}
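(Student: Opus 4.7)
The plan is to first reduce the latent model to the response model by marginalizing the latent process $\bomega$, and then to carry out explicit posterior moment calculations on the reduced problem. Since $\bomega\given\bSigma\sim\mbox{MN}_{n,q}(\mathbf{0},\brho_{\psi}(\calS,\calS),\bSigma)$ and $\beps\given\bSigma$ is independent Matrix-Normal noise with scale $(\alpha^{-1}-1)\bSigma$, integrating $\bomega$ out of the latent likelihood produces the response marginal $\bY\given\bbeta,\bSigma\sim\mbox{MN}_{n,q}(\bX\bbeta,\bkappa(n),\bSigma)$. Hence the marginal posterior of $(\bbeta,\bSigma)$ coincides in the two models, and the equivalence need only be proved once, using the explicit MNIW posterior \eqref{eq: collapsed_spatial_pars1}.

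For the ``if'' direction, I would treat $\bbeta$ and $\bSigma$ separately. Conditionally on $\bSigma$, $\bbeta\sim\mbox{MN}(\bmu^\ast,\bV^\ast,\bSigma)$ with $\bV^\ast=(\mathbf{J}(n)+\bV_r^{-1})^{-1}$; since $\bV_r^{-1}$ is fixed, $\|\bV^\ast\|\to 0$ exactly when $\lambda_{\min}\{\mathbf{J}(n)\}\to\infty$, which kills the conditional variance. Writing $\bY=\bX\bbeta_0+\bE$ under the true law with $\bE\sim\mbox{MN}(\mathbf{0},\bkappa(n),\bSigma_0)$, a direct expansion gives $\bmu^\ast-\bbeta_0=-\bV^\ast\bV_r^{-1}(\bbeta_0-\bmu_{\bbeta})+\bV^\ast\bX^\top\bkappa(n)^{-1}\bE$, where the first summand is deterministic of order $\|\bV^\ast\|$ and the second is mean-zero Matrix-Normal with row-scale $\bV^\ast\mathbf{J}(n)\bV^\ast\preceq\bV^\ast$; a conditional Chebyshev argument then yields $\bbeta\to\bbeta_0$ in posterior probability. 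For $\bSigma\sim\mbox{IW}(\bPsi^\ast,\nu^\ast)$ with $\nu^\ast=\nu+n\to\infty$, I would show $\bPsi^\ast/n\to\bSigma_0$ using $\mbox{E}[\bY^\top\bkappa(n)^{-1}\bY]=\bbeta_0^\top\mathbf{J}(n)\bbeta_0+n\bSigma_0$ together with the cancellation $\bmu^{\ast\top}\bV^{\ast-1}\bmu^\ast\approx\bbeta_0^\top\mathbf{J}(n)\bbeta_0$ inherited from the $\bbeta$-analysis, leaving the $n\bSigma_0$ residual. Combined with the shrinking inverse-Wishart scale, this gives posterior consistency of $\bSigma$.

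For the ``only if'' direction I would argue by contraposition: if $\lambda_{\min}\{\mathbf{J}(n)\}$ stays bounded along a subsequence, choose unit vectors $\bv_n\in\Re^p$ with $\bv_n^\top\mathbf{J}(n)\bv_n$ bounded; then for any unit $\bu\in\Re^q$ the scalar $\bv_n^\top\bbeta\bu$ has conditional posterior variance $(\bv_n^\top\bV^\ast\bv_n)(\bu^\top\bSigma\bu)$ that stays bounded away from zero, obstructing concentration of $\bbeta$ at any point. The main obstacle I anticipate is the $\bSigma$ side of the ``if'' argument: one must control $\bY^\top\bkappa(n)^{-1}\bY$ in operator norm rather than just in expectation, which calls for a concentration inequality for Matrix-Normal quadratic forms plus careful spectral bookkeeping of $\bmu^{\ast\top}\bV^{\ast-1}\bmu^\ast$ so that the two leading terms of order $\bbeta_0^\top\mathbf{J}(n)\bbeta_0$ cancel cleanly and only the $n\bSigma_0$ piece survives.
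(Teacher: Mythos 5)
Your overall architecture is sound and in places cleaner than the paper's. Marginalizing $\bomega$ to collapse the latent model onto the response model is valid (the paper instead carries the two models in parallel through \eqref{eq: collapsed_spatial_pars1} and \eqref{eq: augmented_conj_post_v1_pars}, checking that both posteriors reduce to the same form). Your treatment of $\bbeta$ matches the paper's in substance: both exploit that $\bbeta\given\bSigma,\bY(n)$ is Matrix-Normal with row scale $\bV^\ast=(\mathbf{J}(n)+\bV_r^{-1})^{-1}$, reduce consistency to vanishing of $\bV^\ast$, and use Chebyshev; the paper invokes Eicker's theorem to get the precise equivalence between $\lambda_{\min}\{\bV^{\ast-1}(n)\}\to\infty$ and the vanishing of all diagonal entries of $\bV^\ast(n)$, together with the sandwich $\lambda_{\min}\{\mathbf{J}(n)\}\leq\lambda_{\min}\{\bV^{\ast-1}(n)\}\leq\lambda_{\min}\{\mathbf{J}(n)\}+\lambda_{\max}\{\bV_r^{-1}\}$, while you handle the prior-induced bias term explicitly. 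Your contrapositive for the ``only if'' direction (a bounded direction $\bv_n^\top\mathbf{J}(n)\bv_n$ forces $\bv_n^\top\bV^\ast\bv_n\geq 1/(\bv_n^\top\bV^{\ast-1}\bv_n)$ to stay bounded below, so the Gaussian conditional posterior cannot collapse) is fine and suffices for the pair.

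The genuine gap is exactly where you flag it: the consistency of $\bSigma$. Your plan --- expand $\bY^\top\bkappa(n)^{-1}\bY$ and arrange a cancellation of the two terms of order $\bbeta_0^\top\mathbf{J}(n)\bbeta_0$ in $\bPsi^\ast(n)$ --- requires operator-norm control of quadratic forms whose leading parts may grow at an arbitrary rate, and you do not supply the concentration tool that would make the cancellation rigorous. The paper sidesteps this entirely with a projection identity: writing the model as the augmented whitened regression \eqref{eq: augment_linear_latent}, one has $\bPsi^\ast(n)=\bPsi+\bu(n)^\top\{\bI-\bH^\ast(n)\}\bu(n)$, where $\bu(n)=\bY^\ast(n)-\bX^\ast(n)\bgamma$ has independent standard Matrix-Normal rows under the true (prior-times-likelihood) law and $\bH^\ast(n)$ is the idempotent hat matrix of rank $n+p$. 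The strong law of large numbers applied to $\bu(n)^\top\bu(n)/n$ and, after an orthogonal rotation diagonalizing $\bH^\ast(n)$, to $\bu(n)^\top\bH^\ast(n)\bu(n)/n$ gives $\bPsi^\ast(n)/n\to 2\bSigma_0-\bSigma_0=\bSigma_0$ a.s., with no cancellation of design-dependent terms and no condition on $\mathbf{J}(n)$. This also yields a fact your decomposition obscures: $\bSigma$ is posterior consistent \emph{unconditionally}, so the eigenvalue condition in the theorem is carried entirely by $\bbeta$. To complete your proof you would either need to import this residual-projection argument or prove a Hanson--Wright-type bound for the Matrix-Normal quadratic forms you introduce; as written, the $\bSigma$ half of the ``if'' direction is not established.
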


When the explanatory variables share the same spatial correlation with the responses, the necessary and sufficient conditions for Theorem~\ref{thm1} hold (see Remark~S.2). When the explanatory variables are themselves regarded as independent observations, the necessary and sufficient conditions in Theorem~\ref{thm1} hold (see Remark~S.3). 

\section{{Simulation}}\label{sec: simulation}
We present two simulation examples. The first compares BLMC model with other multivariate Bayesian spatial models. The second assesses our BLMC model when $K$ is not excessively large. BLMC models were implemented in Julia 1.2.0 \citep{bezanson2017julia}. We modeled the univariate processes in the proposed BLMC by NNGP. We took the Bayesian LMC model proposed by \citet{schmidt2003bayesian} as a benchmark in the first simulation example. The benchmark model was implemented in R 3.4.4 through function 
\textit{spMisalignLM} in the R package \textit{spBayes} \citep{finley2007spbayes}. 
{The posterior inference for each model was based on MCMC chains with 5,000 iterations after a burn-in of 5,000 iterations.}
{All models were run on} a single 8 Intel Core i7-7700K CPU @ 4.20GHz processor with 32 Gbytes of random-access memory running Ubuntu 18.04.2 LTS. Convergence diagnostics and other posterior summaries were implemented within the Julia statistical environment. Model comparisons were based on parameter estimates (posterior mean and 95\% credible interval), root mean squared prediction error (RMSPE
), mean squared error of intercept-centered latent processes (MSEL
), prediction interval coverage (CVG; the percent of intervals containing the true value), interval coverage for intercept-centered latent process of observed response (CVGL), average continuous rank probability score (CRPS; see \citet{gneiting2007strictly}) for responses, and the average interval score (INT; see \citet{gneiting2007strictly}
) for responses and run time. { We assessed convergence of MCMC chains by visually monitoring auto-correlations and checking the accuracy of parameter estimates using effective sample size (ESS) \citep[][Sec. 10.5]{gelman2013} and Monte Carlo standard errors (MCSE) with batch size 50 \citep{flegal2008markov}. 
} 
To calculate the CRPS and INT, we assumed that the associated predictive distribution was well approximated by a Gaussian distribution with mean centered at the predicted value and standard deviation equal to the predictive standard error. All NNGP models were specified with at most $m = 10$ nearest neighbors.

\subsection{Simulation Example 1}\label{subsec: sim_1} We simulated the response $\by(\bs)$ from the LMC model in \eqref{eq: SLMC_model_element} with $q = 2, p = 2, K = 2$ over 1200 randomly generated locations over a unit square. The size of the data set was kept moderate to enable comparisons with the expensive full GP based LMC models for experiments conducted on the computing setup described earlier. The explanatory variable $\bx(\bs)$ consists of an intercept and a single predictor generated from a standard normal distribution. An exponential correlation function was used to model $\{\rho_{\psi_k}(\cdot, \cdot)\}_{k = 1}^K$, i.e.,
$
\rho_{\psi_k}(\bs, \bs') = \exp{(-\phi_k\|\bs - \bs'\|)}, \text{ for } \bs, \bs' \in {\mathcal{D}}\;,
$
where $\|\bs - \bs'\|$ is the Euclidean distance between $\bs$ and $\bs'$, and $\psi_k = \phi_k$ is the decay for each $k$. 
We randomly picked 200 locations for predicting each response to examine the predictive performance. 
{Appendix~\ref{sm: values_examples} presents the fixed parameters generating the data and the subsequent posterior estimates.} 

For NNGP based BLMC model, {we assigned a flat prior for $\bbeta$, which makes $\bL_{\bbeta}^{-1}$ in \eqref{eq: augment_linear_LMC} a zero matrix. The prior for $\bLambda$ followed \eqref{eq: LMC_priors} with $\bmu_{\bLambda}$ a zero matrix and $\bV_{\bLambda}$ a diagonal matrix whose diagonal elements are 25.} The prior for $\bSigma$ was set to follow $\mbox{IW}(\bPsi, \nu)$ with $\bPsi = \mbox{diag}([1.0, 1.0])$ and $\nu = 3$. For the benchmark LMC, we assigned a flat prior for $\bbeta$, $\mbox{IW}(\bPsi, \nu)$ with $\bPsi = \mbox{diag}([1.0, 1.0])$ and $\nu = 3$ for the cross-covariance matrix $\bLambda^\top\bLambda$, and $\mbox{IG}(2, 0.5)$ for each diagonal element of $\bSigma$. 
We assigned $\mbox{unif}(2.12, 212)$ as priors of decays for both models. 
{This
implies that the ``effective spatial range'', which is the distance where spatial correlation drops below 0.05, will be bounded above by $\sqrt{2}$ (the maximum inter-site distance within a unit square) and bounded below by 1/100th of that to ensure a wide range.} 

Table~\ref{table:sim2} presents posterior estimates of parameters and performance metrics for all candidate models. 
Both models provided similar posterior inferences for $\{\bbeta_{21}, \bbeta_{21}\}$. The 95\% credible intervals of $\{\bbeta_{11}, \bbeta_{12}\}$ all include the true value used to generate the data. 
The NNGP based BLMC model and the benchmark LMC model cost {2.38} minutes and around {18.25} hours, respectively. Despite the shorter running time, we observed superior performance of the NNGP based BLMC than the benchmark LMC for inferring on the latent process using CVGL{, MSEL, CRPSL and INTL}. Moreover, the interpolated map of the recovered intercept-centered latent processes (Figure~\ref{fig:sim}) 
by BLMC and benchmark LMC are almost indistinguishable from each other. BLMC and benchmark LMC produce very similar RMSPEs, CRPSs and INTs. 
The differences in estimates between the two models is likely emerging from the different prior settings and sampling schemes. Benchmark LMC restricts the loading matrix $\bLambda$ to be upper triangular, while BLMC does not, resulting in greater flexibility in fitting latent process. On the other hand, the unidentifiable parameter setting of BLMC cause less somewhat less stable inference for the hyperparameters $\{\phi_1, \phi_2\}$. {The inferences for $\{\bbeta_{11}, \bbeta_{12}\}$ are also less stable due to the sensitivity of intercept to latent process. For all other parameters including the intercept-centered latent process on 1200 locations, the median ESS is 4111.5.
All MCSEs were consistently less than 0.02. These diagnostics suggest adequate convergence of the MCMC algorithm.} 


\begin{table}[!ht]
\caption{Simulation study summary table: posterior mean (2.5\%, 97.5\%) percentiles}
\begin{minipage}[t]{\textwidth} 
\centering
\scalebox{1.0}{
{
	\begin{tabular}{c|c|cc|cc}
	\hline\hline
			&  & \multicolumn{2}{c}{BLMC} & \multicolumn{2}{c}{benchmark LMC} \\
	\hline
	& true & inference & MCSE  & inference & MCSE \\
			$\bbeta_{11}$ & 1.0 & 0.705 (0.145, 1.233) & 0.034 &  0.806 (0.502,  1.131) & 0.002  \\
			$\bbeta_{12}$ & -1.0 &  -1.24 (-1.998, -0.529) & 0.045 & 
			 -1.1 (-1.533, -0.646)& 0.001 \\
			$\bbeta_{21}$ & -5.0 & -4.945 (-5.107, -4.778) & 0.002 & 
			-4.949 ( -5.113, -4.787) & 0.004 \\
			$\bbeta_{22}$ & 2.0 & 1.979 (1.78, 2.166) & 0.004 & 
			1.974 (1.785, 2.167) &  0.002  \\
			$\bSigma_{11}$ & 0.4 & 0.346 (0.283, 0.409) & 0.002 & 
			0.306 (0.248, 0.364) & 0.003 \\
			$\bSigma_{12}$ & 0.15 & 0.133 (0.072, 0.194) & 0.003 & 0.0 & -- \\
			$\bSigma_{22}$ & 0.3 & 0.29 (0.198, 0.386) & 0.004 & 
			0.233 (0.159, 0.334)& 0.005 \\
			$\phi_1$ & 6.0 & 8.723 (4.292, 14.065) & 0.343 & 
			12.839 (8.805, 17.471) &  0.23 \\
			$\phi_2$ & 18.0 & 22.63 (15.901, 29.555) & 0.416 & 
			18.075 (12.99, 23.741)& 0.301 \\
			\hline
			RMSPE$^a$
			&-- & [0.728, 0.756, 0.742] &  & [0.725, 0.762, 0.744] & \\
			MSEL$^b$ 
			& -- & [0.136, 0.168, 0.152] &  & [0.147, 0.192, 0.169] &  \\
			CRPS$^a$
			& -- & [-0.412, -0.423, -0.418] &  & [-0.41, -0.427, -0.418] &  \\
			CRPSL$^b$
			& -- & [-0.035, -0.038, -0.036] &  & [-0.216, -0.248, -0.232] & \\
			CVG$^a$
			&-- & [0.915, 0.955, 0.935] &  & [0.925, 0.96, 0.9425] & \\
			CVGL$^b$ 
			&-- & [0.946, 0.962, 0.954]  &  & [0.756, 0.773, 0.765] &  \\
			INT$^a$
			&--& [3.378, 3.756, 3.567] &  & [3.347, 3.823, 3.585]& \\
			INTL$^b$
			& --& [0.282, 0.329, 0.305] &  &  [1.875, 2.023, 1.949] & \\
			time(s) &
			& 143 & & [42047, 23664]$^c$ &
			\\
			\hline\hline
	\end{tabular}
}
}
\footnotetext[1]{[response 1, response 2, all responses]}
\footnotetext[2]{intercept + latent process on 1000 observed locations for [response 1, response 2, all responses]} 

\footnotetext[3]{[time for MCMC sampling, time for recovering predictions]}
\end{minipage}
\label{table:sim2}
\end{table}

\begin{figure}
     \subfloat[$\bomega_1 + \bbeta_{11}$ true \label{subfig:sim2a}]{%
       \includegraphics[width=0.28\textwidth]{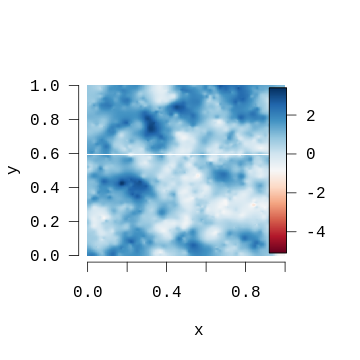}
     }
     \hfill
     \subfloat[$\bomega_1 + \bbeta_{11}$ BLMC\label{subfig:sim2c}]{%
       \includegraphics[width=0.28\textwidth]{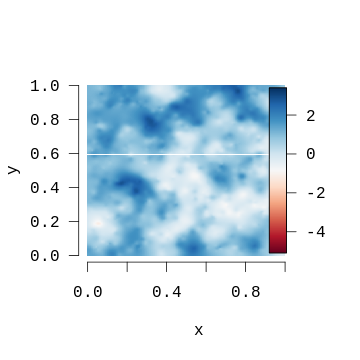}
     }
     \hfill
     \subfloat[$\bomega_1 + \bbeta_{11}$ benchmark LMC\label{subfig:sim2d}]{%
       \includegraphics[width=0.28\textwidth]{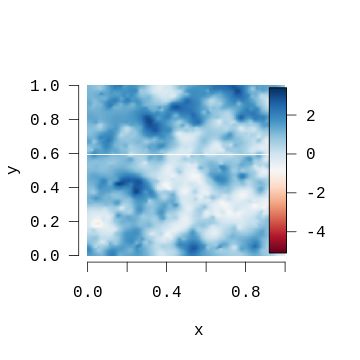}
     }\\
     
     \subfloat[$\bomega_2 + \bbeta_{12}$ true\label{subfig:sim2e}]{%
       \includegraphics[width=0.28\textwidth]{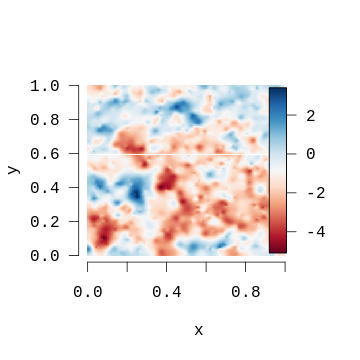}
     }
     \hfill
     \subfloat[$\bomega_2 + \bbeta_{12}$ BLMC\label{subfig:sim2g}]{%
       \includegraphics[width=0.28\textwidth]{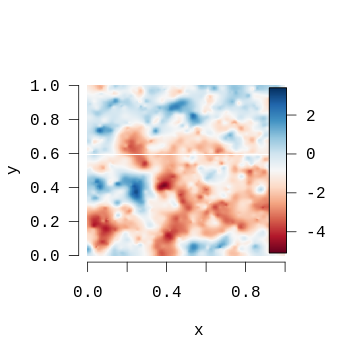}
     }
     \hfill
     \subfloat[$\bomega_2 + \bbeta_{12}$ benchmark LMC \label{subfig:sim2h}]{%
       \includegraphics[width=0.28\textwidth]{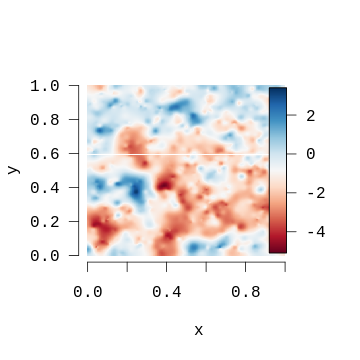}
     }
     \\
        \subfloat[fitted correlation with $K = 2$ \label{subfig:heatmap_sim3_2a}]{%
       \includegraphics[width=0.25\textwidth]{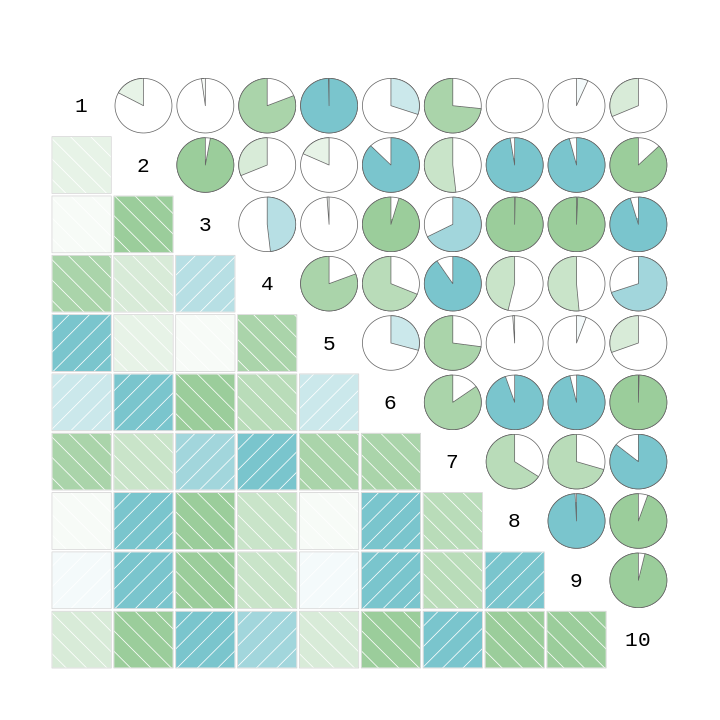}
     }
     \hfill
     \subfloat[fitted correlation with $K = 4$ \label{subfig:heatmap_sim3_2b}]{%
     \includegraphics[width=0.25\textwidth]{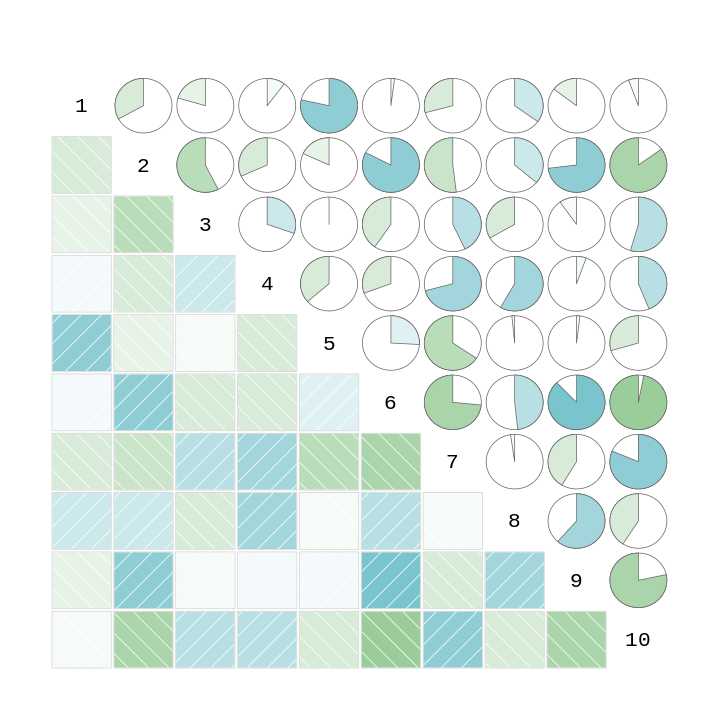}
     }
     \hfill
     \subfloat[fitted correlation with $K = 6$ \label{subfig:heatmap_sim3_2c}]{%
     \includegraphics[width=0.25\textwidth]{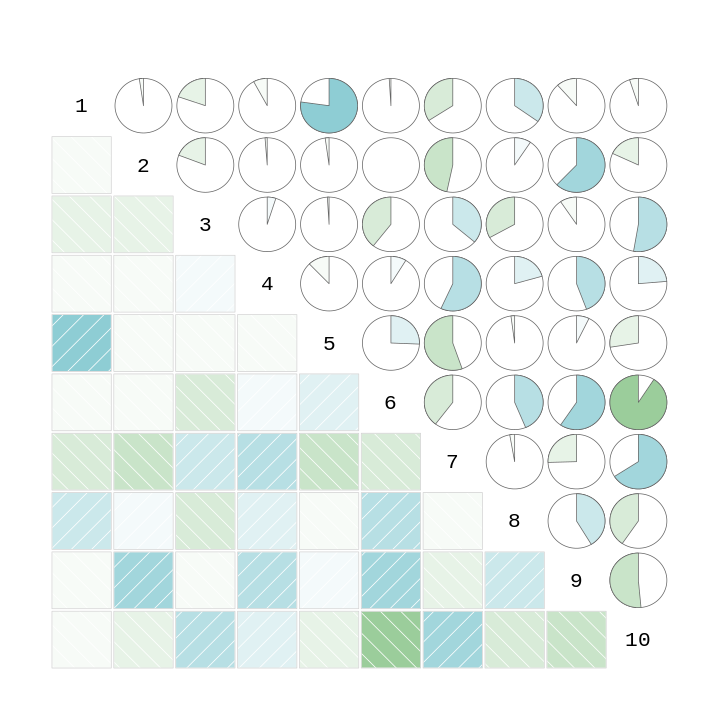}
     }\\
     
     \subfloat[fitted correlation with $K = 8$ \label{subfig:heatmap_sim3_2d}]{%
       \includegraphics[width=0.25\textwidth]{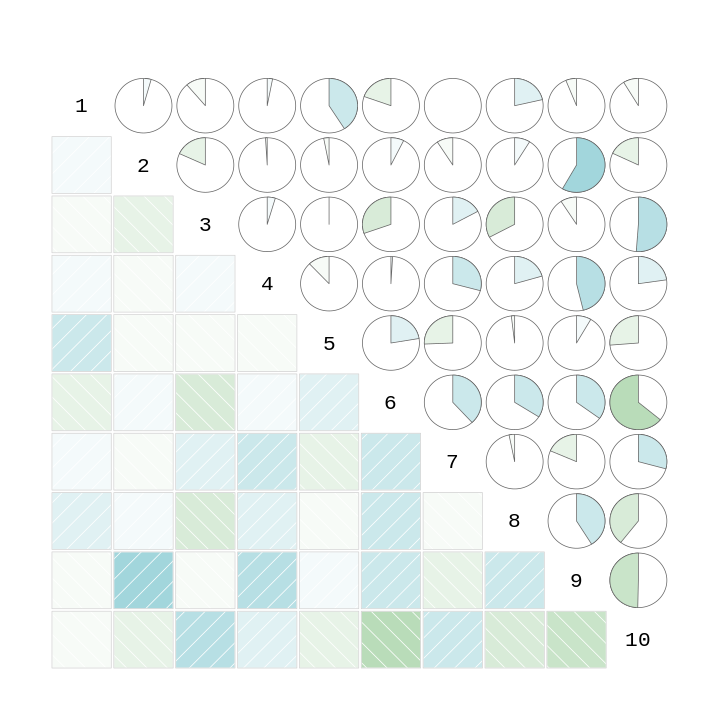}
     }
     \hfill
     \subfloat[fitted correlation $K = 10$ \label{subfig:heatmap_sim3_2e}]{%
       \includegraphics[width=0.25\textwidth]{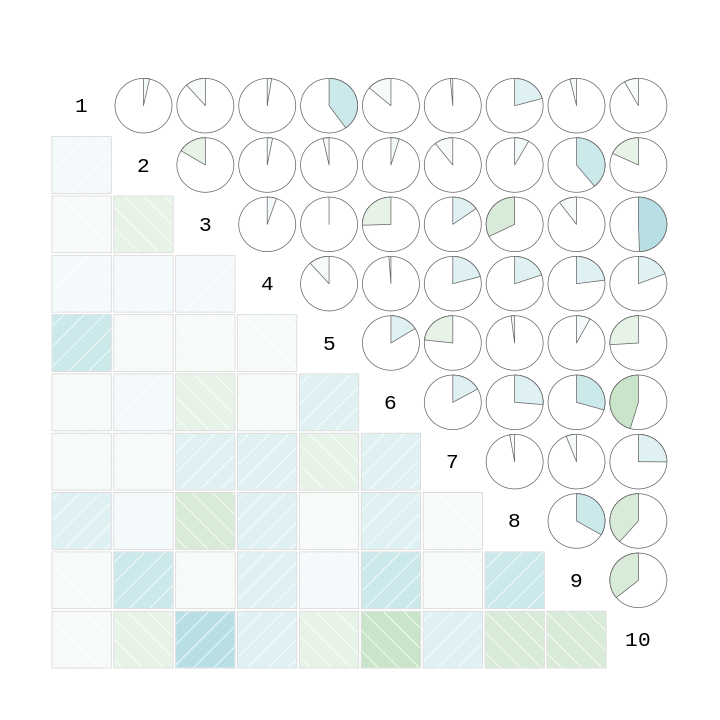}
     }
     \hfill
     \subfloat[correlation of the raw data \label{subfig:heatmap_sim3_2f}]{%
       \includegraphics[width=0.25\textwidth]{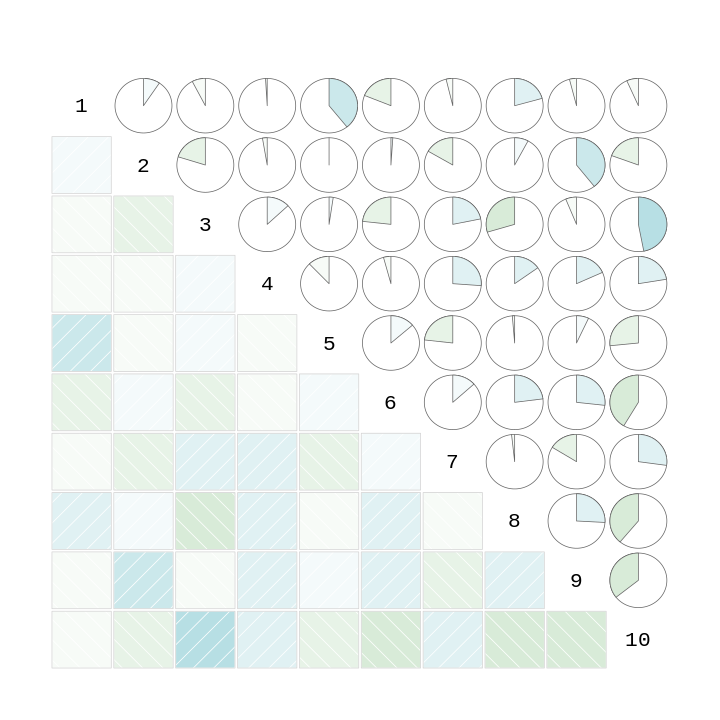}
     }
     \caption{Interpolated maps of (a) $\&$ (d) the true generated intercept-centered latent processes, the posterior means of the intercept-centered latent process $\bomega$ from  
     the (b) $\&$ (e) NNGP based BLMC model and the (c) $\&$ (f) benchmark LMC model.
     Heat-maps of the (l) actual finite sample correlation among latent processes and (g)-(k) posterior mean of finite sample correlation among latent processes based on the posterior samples of $\bOmega$.
     \label{fig:sim}}
\end{figure}

\subsection{Simulation Example 2}\label{subsec: sim_2}
We generated {100 different} data sets using \eqref{eq: SLMC_model_element} with $\{q = 10, p = 3, K = 50\}$ and a diagonal $\bSigma$ (i.e., independent measurement errors across outcomes). {Appendix~\ref{sm: values_examples} presents the parameter values used to generate the data sets. We fixed a set of 1200 irregularly situated locations inside a unit square. The explanatory variable $\bx(\bs)$ comprised an intercept and two predictors generated independently from a standard normal distribution. The same set of locations and explanatory variables were used for the 100 data sets. Each $f_k(\bs)$ was generated using an exponential covariance function, $\{\rho_{\psi_k}(\cdot, \cdot)\}_{k = 1}^K$, where $\psi_k = \phi_k$ was the decay for $k = 1, \ldots, K$. We held out $200$ locations for assessing predictive performances.} 

{For each simulated data set}, we fitted the BLMC model specifying a diagonal $\bSigma$ with $K$ from $1$ to $10$. 
Each $\phi_k$ has a Gamma prior {with shape and scale equaling 2 and 4.24, respectively, so that
the 
expected effective spatial range 
is half of the maximum inter-site distance. We assigned flat prior for $\bbeta$, a vague prior for $\bLambda$ which follows the prior of $\bLambda$ in the preceding example and $\mbox{IG}(2, 1.0)$ priors for the diagonal elements of $\bSigma$.}

{
The posterior mean and the 95\% credible interval of CVGL, CVG, RMSPE and diagnostics metric 
MCSE for regression slopes and $\bSigma$ for 100 simulation studies are summarized by $K$ in Table~\ref{table:sim3_hoff}. } 
{Inference for CVG and MCSE were robust to the choice of $K$. All of the 95\% credible intervals for CVG and MCSE were within [0.9, 0.99] and [0.0, 0.02], respectively.} 
As shown in Table~\ref{table:sim3_hoff}, the performance metrics were quickly improved as $K$ increased from $1$ to $10$. {On average, RMSPE decreased by about 30.9\% and CVGL increased from 28\% to 95\%.} Given that our data comes from an LMC model with $K = 50$, we can conclude that BLMC with diagonal $\bSigma$ is efficient in obtaining inference for the latent processes even when $K$ is not adequately large. {We also create heat-maps of the posterior mean of our finite sample correlation matrix among the latent processes based on posterior samples of $\bOmega$, 
where $\bOmega = \frac{1}{n} \sum_{i = 1}^n(\bomega(\bs_i) - \bar{\bomega})(\bomega(\bs_i) - \bar{\bomega})^\top$ with $\bar{\bomega}$ the vector of column means of $\bomega$. 
Figures~\ref{subfig:heatmap_sim3_2a}---\ref{subfig:heatmap_sim3_2e} depict such heat maps from one of the 100 simulated data sets. As $K$ increases from $2$ to $10$, the estimated correlation matrix approaches the true correlation matrix. 
The plots also reveal that the performance of BLMC is sensitive to the choice of $K$.}
We recommend choosing $K$ based on scientific considerations for the problem at hand and exploratory data analyses, or checking the RMSPE value for different $K$ and picking $K$ by an elbow rule \citep{thorndike1953belongs}.

\begin{table} 
	\centering
	\caption{
	{Simulation study summary table 2: posterior mean (2.5\%, 97.5\%) percentiles}} 
\scalebox{0.9}{
{
	\begin{tabular}{c|cccccccccc}
	\hline\hline
		K = & 1 & 2 & 3 & 4 \\
		\hline
		CVGL& 0.28(0.14, 0.93) & 0.39(0.17, 0.94) & 0.49(0.2, 0.95) & 0.58(0.25, 0.96) \\
		CVG & 0.95(0.92, 0.98) & 0.95(0.92, 0.98) & 0.95(0.92, 0.98) & 0.95(0.92, 0.98) \\
		RMSPE& 2.07(1.94, 2.18) & 1.96(1.86, 2.05) & 1.87(1.78, 1.97) & 1.78(1.7, 1.85) \\
		MCSE & 0.004(0.002, 0.008) & 0.005(0.002, 0.01)  & 0.005(0.003, 0.01) & 0.004(0.003, 0.01) \\
	\hline\hline
	    K = & 5 & 6 & 7 & 8 \\
		\hline
		CVGL & 0.67(0.29, 0.96) & 0.74(0.33, 0.96) & 0.81(0.38, 0.96) & 0.86(0.44, 0.96)\\
		CVG & 0.95(0.92, 0.98)& 0.95(0.92, 0.98) & 0.95(0.92, 0.98) & 0.95(0.91, 0.98) \\
		RMSPE &1.7(1.62, 1.77) & 1.63(1.55, 1.69) & 1.56(1.5, 1.63) & 1.51(1.45, 1.57) \\
		MCSE & 0.005(0.003, 0.01) & 0.005(0.003, 0.013) & 0.005(0.003, 0.011)& 0.005(0.003, 0.011) \\
	\hline\hline
	    K = & 9 & 10 & &\\
		\hline
		CVGL& 0.91(0.59, 0.96) & 0.95(0.92, 0.96)  & &   \\
		CVG & 0.95(0.91, 0.98)& 0.95(0.91, 0.98)  &  & \\
		RMSPE& 1.46(1.41, 1.51) & 1.43(1.38, 1.48)  & &   \\
		MCSE & 0.005(0.003, 0.01) & 0.005(0.003, 0.01) & &   \\
	\hline\hline
	\end{tabular}
}}	
	\label{table:sim3_hoff}
\end{table}

\section{Remote-sensed Vegetation Data Analysis}\label{sec: real_data_analy}
We apply our proposed models to analyze Normalized Difference Vegetation Indices (NDVI) and Enhanced Vegetation Indices (EVI) measuring vegetation activity on the land surface, which can help us understand the global distribution of vegetation types as well as their biophysical and structural properties and spatial variations. Apart from vegetation indices, we consider Gross Primary Productivity data, Global Terrestrial Evapotranspiration (ET) Product, and landcover data \citep[see][for further details]{ramon2010modis, mu2013modis, sulla2018user}.
The geographic coordinates of our variables were 
mapped on a Sinusoidal (SIN) projection grid. We focus on zone \textit{h08v05}, which covers 11,119,505 to 10,007,555 meters south of the prime meridian and 3,335,852 to 4,447,802 meters north of the equator. The land is situated in the western United States. Our explanatory variables included an intercept and a binary indicator for no vegetation or urban area through the 2016 land cover data. All other variables were measured through the MODIS satellite over a 16-days period from 2016.04.06 to 2016.04.21. Some variables were rescaled and transformed in exploratory data analysis for the sake of better model fitting.
The data sets were downloaded using the \texttt{R} package \textit{MODIS} and the code for the exploratory data analysis is provided as supplementary material to this paper.  

Our data comprises 1,020,000 observed locations to illustrate the proposed model. 
Our spatially dependent outcomes were the transformed NDVI ($\log(\mbox{NDVI} + 1)$ labeled as NDVI) and red reflectance (red refl). A Bayesian multivariate regression model, {defined by \eqref{eq: SLMC_model_element} excluding $\bLambda^\top \fb(\bs)$}, was also fitted for comparisons. 
All NNGP based models used $m = 10$ nearest neighbors.
We randomly held out 10\% of each response and then held all responses over the region 10,400,000 to 10,300,000 meters south of the prime meridian and 3,800,000 to 3,900,000 meters north of the equator to evaluate the models' predictive performance over a missing region (white square) and randomly missing locations. 
Figure~\ref{subfig:real_conj_latent_mapsa} illustrates the map of the transformed NDVI data. 

{We fit both models with 5,000 iterations after 5,000 iterations as burn-in.} 
The priors for all parameters except decays followed those in the simulation section. We assigned $\mbox{Gamma}(200, 0.02)$ and $\mbox{Gamma}(200, 0.04)$ for $\phi_1$ and $\phi_2$ for BLMC based on fitted variograms to the raw data. 
All the code were run with single thread. No other processes were simultaneously run so as to provide an accurate measure of computing time. 

Table~\ref{table:real_BSLMC} presents results on the BLMC. 
The regression coefficients of the index of no vegetation or urban area show relatively low biomass (low NDVI) and high red reflectance over no vegetation or urban area. {Estimates of $\bSigma$ and the finite sample process covariance matrix $\bOmega$, as defined in Section~\ref{subsec: sim_2}, show a negative association between the residuals and latent processes of transformed NDVI and red reflectance}, which satisfies the underlying relationship between two responses. BLMC captured a high negative correlation ($\approx -0.87$) between the latent processes of two responses, indicating that the spatial pattern of the latent processes of NDVI and red-reflectance are almost the reverse of each other. The maps of the latent processes recovered by BLMC, presented in Figure~\ref{fig:real_conj_latent_maps}, also support this relationship. 

\begin{table}[!ht]
\caption{Vegetation data analysis summary table 1: posterior mean (2.5\%, 97.5\%) percentiles}
\begin{minipage}[t]{\textwidth} 
\centering
\scalebox{0.88}{	
	\begin{tabular}{c|c|cc}
	\hline\hline
	& {Bayesian linear model} & \multicolumn{2}{c}{BLMC} \\
	\hline
		& inference & inference & {MCSE}  \\
	\hline
			$\mbox{intercept}_1$  & {0.2515(0.2512, 0.2517)} &   
			{0.1433(0.1418, 0.1449)} & {1.145e-4} \\
			$\mbox{intercept}_2$ & {0.1395(0.1394, 0.1396)}&  
			{0.1599 (0.159, 0.1608)} & {6.17e-5}  \\
			$\mbox{no vege or urban area}_1$ & 
			{-0.1337( -0.1346, -0.1328)}& 
			{-1.385e-2 (-1.430e-2, -1.342e-2)} & {1.69e-5} \\
			$\mbox{no vege or urban area}_2$ & 
			{6.035e-2 (5.992e-2, 6.075e-2)}&  
			{7.831e-3 (7.584e-3, 8.097e-3)} & {8.24e-6} \\
			$\bSigma_{11}$ & 
			{1.599e-2 (1.594e-2, 1.603e-2)} & 
			{3.514e-4 (3.477e-4, 3.553e-4)} & {1.93e-7} \\
			$\bSigma_{12}$ & {-6.491e-3(-6.512e-3, -6.471e-3)}& 
			{-1.084e-4 (-1.100e-4, -1.067e-4)} & {8.19e-8} \\
			$\bSigma_{22}$ & {3.656e-3(3.646e-3, 3.667e-3)} & 
			{1.074e-4 (1.063e-4, 1.084e-4)}& {4.79e-8} \\
			$\bOmega_{11}$ & --&  
			{1.675e-2(1.674e-2, 1.676e-2)} & {4.17e-7} \\
			$\bOmega_{12}$ &-- & 
			{-6.873e-3(-6.879e-3, -6.867e-3)} & {1.77e-7} \\
			$\bOmega_{22}$ &-- & 
			{3.764e-3 (3.760e-3, 3.768e-3)} & {9.06e-8} \\
			$\phi_1$ &--& 
			{3.995 (3.887,  4.075)} & {7.535e-3 } \\
			$\phi_2$ &--& 
			{12.376 (11.512, 13.320)} & {7.60e-3} \\
			\hline
			RMSPE$^a$
			& {[0.074, 0.0359, 0.0581]} & {[0.0326, 0.0171, 0.0260]} & \\
			CRPS$^a$
			& {[-0.04135, -0.01988, -0.03061]} & 
			{[-0.01561, -0.00879, -0.0122]} & \\
			CVG$^a$
			& {[0.956, 0.958, 0.957]} & {[0.954, 0.947, 0.950]} & \\
			INT$^a$
			& {[0.3468, 0.1711, 0.2589]}& {[0.1965, 0.0995, 0.1480]} &\\
			time(mins)& {10.83}
			& {2317.5} &\\
			\hline\hline
	\end{tabular}
\footnotetext[1]{[1st response transformed NDVI, 2nd response red reflectance, all responses]}
}
\end{minipage}
\label{table:real_BSLMC}
\end{table}

{We provide 
RMSPE, 
CVG, CRPS, INT, MCSE and run time in Table~\ref{table:real_BSLMC}. 
Apparently BLMC substantially improved predictive accuracy. 
BLMC's 
RMSPEs were over 50\% less than the Bayesian linear model. CVG is similar between two models, while INT and CRPS also favored BLMC over the Bayesian linear model.} Figure~\ref{fig:real_conj_latent_maps} presents the estimated latent processes from BLMC. Notably, the BLMC smooths out the predictions in the held-out region.  
The model's run time was around 38.6 hours, 
which is still impressive given the full model-based analysis it offers for such a massive multivariate spatial data set. 

\begin{figure}[!ht]
     \subfloat[ \label{subfig:real_conj_latent_mapsa}]{%
       \includegraphics[width=0.3\textwidth]{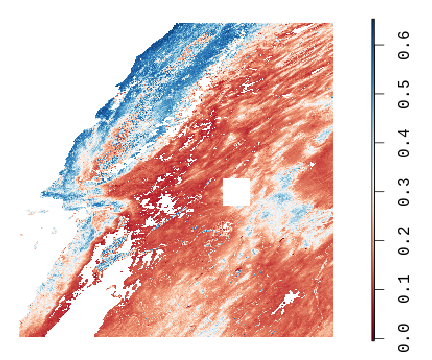}
     }
     \hfill
     \subfloat[ \label{subfig:real_small_mapg}]{%
       \includegraphics[width=0.3\textwidth]{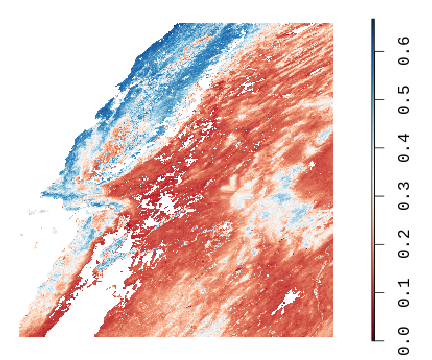}
     }
     \hfill
     \subfloat[ \label{subfig:real_small_mapsh}]{%
       \includegraphics[width=0.3\textwidth]{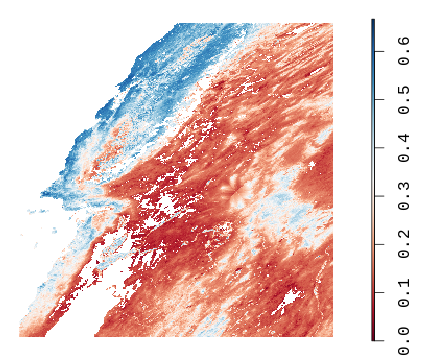}
     }\\
     \subfloat[ \label{subfig:real_conj_latent_mapse}]{%
       \includegraphics[width=0.3\textwidth]{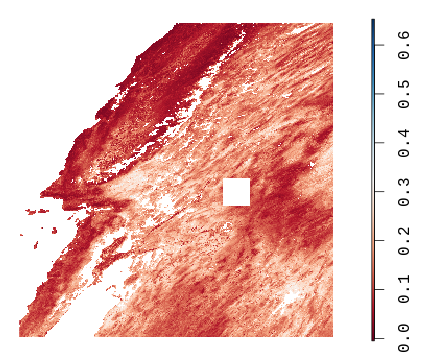}
     }
     \hfill
    \subfloat[ \label{subfig:real_small_mapsi}]{%
       \includegraphics[width=0.3\textwidth]{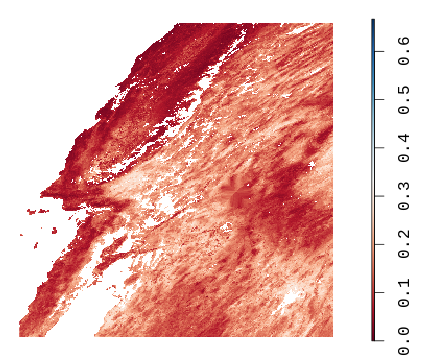}
     }
     \hfill
     \subfloat[ \label{subfig:real_small_mapsg}]{%
       \includegraphics[width=0.3\textwidth]{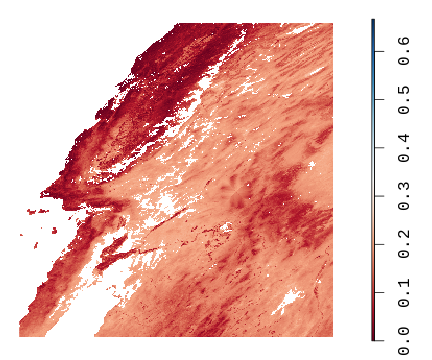}
     }
     \hfill
     \subfloat[ \label{subfig:corr_resp_factor_mapsa}]{%
       \includegraphics[width=0.3\textwidth]{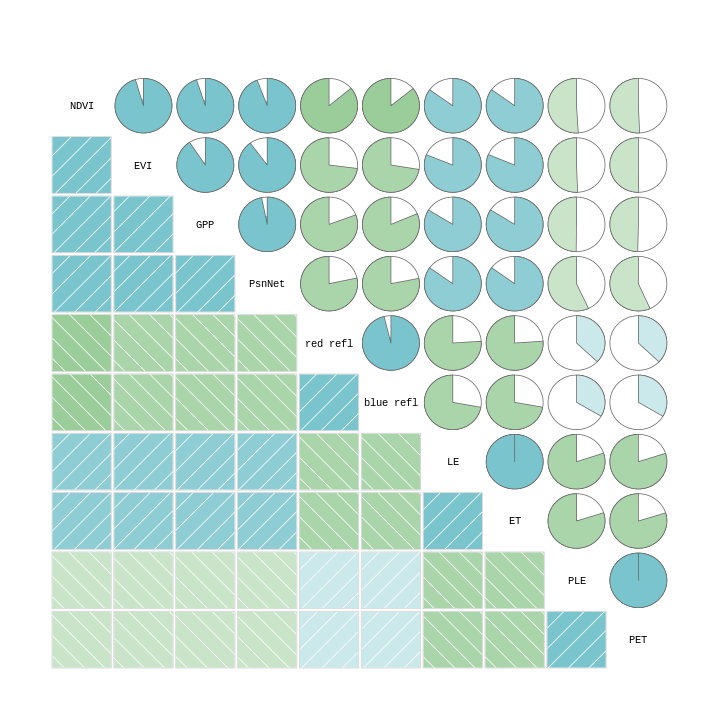}}
       \hfill
     \subfloat[ \label{subfig:corr_resp_factor_mapsb}]{%
       \includegraphics[width=0.3\textwidth]{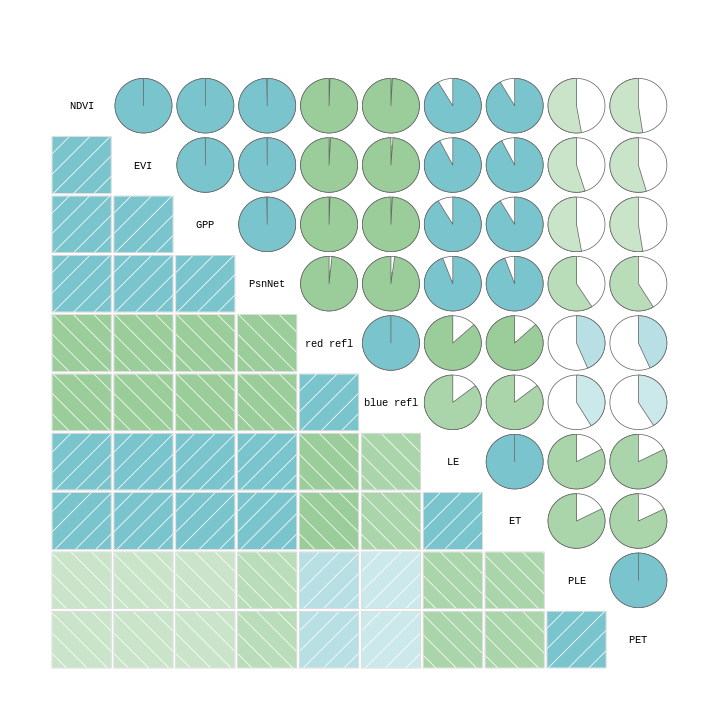}
     }
     \hfill
     \subfloat[ Misalignment \label{subfig20}]{%
       \includegraphics[width=0.35\textwidth]{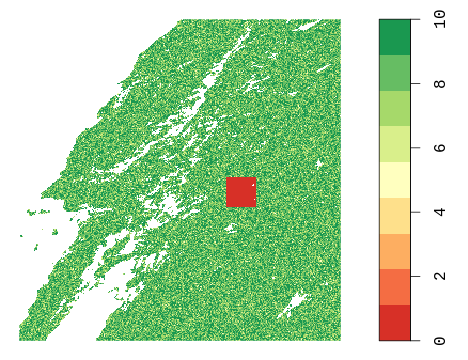}
     }
     \caption{Colored NDVI and red reflectance images of western United States (zone h08v05). Maps of raw data (a) \& (d) and the posterior mean of the intercept-centered latent process recovered from (b) \& (e) BLMC and (c) \& (f) BLMC with diagonal $\bSigma$. Correlation of responses (g) and posterior mean of finite sample correlation among latent processes from the BLMC model with diagonal $\bSigma$ (h).
     Heat-map (i) of counts of observed response. Each observed location is labeled with a color dot whose color represents the count of observed responses on the location. The greener the color, the higher the count. }
\label{fig:real_conj_latent_maps}
\end{figure}

We also fitted a BLMC with diagonal $\bSigma$ to explore the underlying latent processes of ten (transformed) responses: (i) NDVI, (ii) EVI, (iii) Gross Primary Productivity (GPP), (iv) Net Photosynthesis (PsnNet), (v) red reflectance (red refl), (vi) blue reflectance (blue refl), (vii) average daily global evapotranspiration (ET), (viii) latent heat flux (LE), (ix) potential ET (PET) and (x) potential LE (PLE). 
There are, in total, $12,057$ locations with no responses and $656,366$ observed locations with misaligned data (at least one but not all responses), which covers $65.12$\% of observed locations. We provide a heat-map (Figure~\ref{subfig20}) to present the status of misalignment over the study domain.

Based on the exploratory analysis, we observed two groups of responses that have high within-group correlations but relatively low between-group correlations (see Figure~\ref{subfig:corr_resp_factor_mapsa}). 
Hence we picked $K = 2$. 
Estimates from the BLMC model are presented in Table~\ref{table:real_factor}. No vegetation or urban area exhibits lower vegetation indexes (lower NDVI and EVI) and lower production of chemical energy in organic compounds by living organisms (lower GPP and PsnNet). We observe a trend of higher blue reflectance, red reflectance, evapotranspiration (higher ET LE) and lower potential evapotranspiration (lower PET PLE) in urban area and area with no vegetation. We provide maps of posterior predictions for all $10$ variables in Appendix~\ref{SM: predict_map}. 
The latent processes corresponding to transformed NDVI and red reflectance fitted in two analyses 
in Figure~\ref{fig:real_conj_latent_maps} share a similar pattern. 
Finally, the heat map of the posterior mean of the finite sample correlation among the latent processes (elements of $\bOmega$ as defined in Section~\ref{subsec: sim_2}) based on BLMC with diagonal $\bSigma$, presented in Figure~\ref{subfig:corr_resp_factor_mapsb}, reveals a high underlying correlation among NDVI, EVI, GPP, PsnNet, red and blue reflectance, and that LE and ET are slightly more correlated with NDVI and EVI than PLE and PET. 
The total run time for BLMC with diagonal $\bSigma$ was around {60.7 hours (3642.25 minutes)}. 

\begin{table}[!ht]
	\centering
	\caption{Vegetation data analysis summary table 2: posterior mean (2.5\%, 97.5\%)} 
\scalebox{0.88}{
{
	\begin{tabular}{c|cccc}
	\hline\hline
		response & slope & MCSE  & nugget ($\bSigma_{ii}$) & MCSE \\
		\hline
		NDVI & 
		-0.0120 (-0.0124, -0.0116) & 1.37e-5  &
		7.46e-4 ( 7.42e-4,  7.49e-4)& 7.60e-8 \\
		EVI & -4.38e-3(-4.68e-3, -4.08e-3) & 6.86e-6  & 8.68e-4(8.65e-4, 8.7e-4) & 3.07e-8 \\
		GPP & 
		-0.197(-0.199, -0.194) & 8.31e-5 &
		0.0244(0.0243,  0.0245) & 2.34e-6 \\ 
		PsnNet & 
		-4.48e-3(-5.39e-3, -3.50e-3) & 3.42e-5 & 
		5.34e-3(5.32e-3, 5.36e-3) & 3.50e-7 \\ 
		red refl & 
		4.49e-3 (4.20e-3, 4.77e-3) & 5.11e-6 & 
		9.84e-4( 9.81e-4, 9.87e-4) & 3.13e-8 \\
		blue refl & 
		0.0123 (0.0121, 0.0124) & 2.74e-6  &
		2.60e-4(2.59e-4,  2.61e-4) & 8.81e-9 \\
		LE & 
		0.0908(0.0884, 0.0932) & 1.36e-4  & 
		0.0531 (0.0529, 0.0533) &  2.37e-6 \\
		ET & 
		0.0919 (0.0895, 0.0944) & 1.49e-4  & 
		0.0531(0.053, 0.0533) & 2.27e-6 \\
		PLE & 
		-3.64e-3 ( -3.98e-3, -3.36e-3) & 5.50e-5 & 
		2.095e-5 (2.086e-5, 2.104e-5) & 1.63e-9 \\
		PET & 
		-4.88e-3(-5.99e-3, -3.96e-3) & 1.81e-4 &
		6.50e-5 ( 6.44e-5, 6.57e-5) & 2.20e-8 \\
		\hline
	\end{tabular}
}
}
	\label{table:real_factor}
\end{table}

\section{Summary and Discussion}\label{sec: summary}
We have proposed scalable models for analyzing massive and possibly misaligned multivariate spatial data sets. 
Our framework offers flexible covariance structures and scalability by modeling the loading matrix of spatial factors using Matrix-Normal distributions and the factors themselves as NNGPs. This process-based formulation allows us to resolve spatial misalignment by fully model-based imputation. Through a set of simulation examples and an analysis of a massive misaligned data set comprising remote-sensed variables, we demonstrated the inferential and computational benefits accrued from our proposed framework. 


This work can be expanded further in at least two important directions. The first is to extend the current methods to spatiotemporal data sets, where multiple variables are indexed by spatial coordinates, as considered here, as well as by temporal indices. Associations are likely to be exhibited across space and time as well as among the variables within a location and time-point. In addition, these variables are likely to be misaligned across time and space. Regarding the scalability of the spatiotemporal process, we can build a dynamic nearest-neighbor Gaussian process (DNNGP) \citep{datta2016nonseparable} to model spatiotemporal factors and one can also envisage temporal dependence on the loading matrix.

A second direction will consider spatially-varying coefficient models. We model the regression coefficients $\bbeta$ using a spatial (or spatiotemporal) random field to capture spatial (or spatiotemporal) patterns in how some of the predictors impact the outcome. We can assign the prior of the regression coefficients $\bbeta$ using a multivariate Gaussian random field with a proportional cross-covariance function. Then the prior of $\bbeta$ over observed locations follows a Matrix-Normal distribution, which is the prior we designed for $\bbeta$ in all of the proposed models in this article. While the modification seems to be easy, the actual implementation requires a more detailed exploration, and we leave these topics for further explorations. 


From a computational perspective, we clearly need to further explore high-performance computing and high-dimensional spatial models amenable to such platforms. The programs provided in this work are for illustration and have limited usage in Graphical Processing Units (GPU) computing and parallelized CPU computing. 
A parallel CPU computing algorithm for the BLMC model can simultaneously sample multiple MCMC chains, improving the performance of the actual implementations. Implementations with modeling methods such as MRA \citep{katzfussmultires} also requires dedicated programming with GPU. 
Other scalable modeling methods that build graphical Gaussian models on space, time and the number of variables can lead to sparse models for high-dimensional multivariate data and scale not only up to millions of locations and time points, but also to hundreds or even thousands of spatially or spatiotemporally oriented variables. The idea here will be to extend current developments in Vecchia-type models to graphs building dependence among a large number of variables so that the precision matrices across space, time and variables is sparse. Research on scalable statistical models and high-performance computing algorithms for such models will be of substantial interest to statisticians and environmental scientists. 

\section*{Acknowledgements}
The work of the authors have been supported in part by National Science Foundation (NSF) under grants NSF/DMS 1916349 and NSF/IIS 1562303, and by the National Institute of Environmental Health Sciences (NIEHS) under grants R01ES030210 and 5R01ES027027.

\section*{Supporting information}\label{sec: SM}
The MODIS vegetation indices data analyzed in Section~\ref{sec: real_data_analy}, and the Julia code implementing our models are available are available at \url{https://github.com/LuZhangstat/Multi_NNGP}.

\appendix

\bibliographystyle{ba}  
\bibliography{lubib} 

\renewcommand{\theequation}{S.\arabic{equation}}
\renewcommand{\thesection}{S.\arabic{section}}
\renewcommand{\thetheorem}{S.\arabic{theorem}}
\renewcommand{\theremark}{S.\arabic{remark}}

\section{Algorithm of NNGP based BLMC model}\label{SM: BSLMC_NNGP_alg}
We discuss the posterior predictions before going to the detailed algorithm. We use $N_m(\bu_i)$ to denote the $m$ neighbors of $\bu_i \in \calU$ among $\calS$. The posterior prediction for $\fb_k(\calU)$ given in \eqref{eq: FU_LMC_post} follows
\begin{equation}\label{eq: LMC_FU_distr}
\begin{aligned}
    \fb_k(\calU) \given \fb_k, \psi_k &\sim  \mbox{N}(\tilde{\bA}\fb_k, \Tilde{\bD})\;,
\end{aligned}
\end{equation}
where the $(i,j)$-th entry of $\Tilde{\bA}$ is $0$ when $\bs_j \notin N_m(\bu_i)$, and, similar to $\bA_{\rho_k}$, the $m$ nonzero entries in the $i$-th row of $\Tilde{\bA}$ corresponds to the elements of the $1 \times m$ vector \\ $\tilde{\ba}_i^\top = \brho_{\psi_k}(\bu_i, N_m(\bu_i))\brho_{\psi_k}(N_m(\bu_i), N_m(\bu_i))^{-1}$. The $(i,i)$-th diagonal element of $\tilde{\bD}$ equals $\rho_{\psi_k}(\bu_i, \bu_i) - \tilde{\ba}_i^\top\brho_{\psi_k}(N_m(\bu_i), \bu_i)$.
And the posterior sample of $\bY_\calU$ after giving posterior sample of $\bbeta, \bLambda, \bSigma$ and $\bF_\calU$ can be sampled through 
\begin{equation}
    \mbox{MN}(\bX_\calU\bbeta + \bF_\calU\bLambda, \bI_{n'}, \bSigma)\;.
\end{equation}
The following gives the detailed algorithm.\\
\noindent{ \rule{\textwidth}{1pt}
{\fontsize{8}{8}\selectfont
	\textbf{Algorithm~1}: 
	Obtaining posterior inference of $\{\bgamma, \bSigma, \bomega\}$ and predictions on a new set $\calU$ for NNGP based BLMC model \\
	\rule{\textwidth}{1pt}
	\begin{enumerate}
	    \item Precalculation and preallocation for the MCMC algorithm
	    \begin{enumerate}
	        \item Find location sets $\calS$, $\calM$ and the index of the observed and missing response $\{os_i\}_{i = 1}^n$ and $\{ms_i\}_{i = 1}^n$.
	        \item Build the nearest neighbor for $\calS$
	        \item Calculate Cholesky decompositions $\bV_{\bLambda} = \bL_{\bLambda}\bL_{\bLambda}^\top$ and $\bV_{\bbeta} = \bL_{\bbeta}\bL_{\bbeta}^\top$
	        \item Preallocate MCMC samples and initalize MCMC chain with $\bbeta^{(0)}$, $\bLambda^{(0)}$, $\bSigma^{(0)}$ and $\{\psi_k^{(0)}\}_{k = 1}^K$
	    \end{enumerate}

	    \item Block update MCMC alogrithm. For $l = 1 : L$
	    \begin{enumerate}
	        \item Update $\bF^{(l)}$ and impute missing response $\{\by(\bs_i)_{mi}^{(l)}\}_{\bs_i\in \calM}$
	        \begin{itemize}
	            \item Construct $\Tilde{\bX}$ and $\Tilde{\bY}$ in \eqref{eq: argument_liner_LMC}
	            \begin{itemize}
	                \item Build the matrix $\bD_{\bSigma_o}^{\frac{1}{2}} = \mbox{diag}(\{\bSigma_{os_i}^{-\frac{1}{2}}\}_{i = 1}^n)$ in  \eqref{eq: argument_liner_LMC}
	                \hfill{$\mathcal{O}(n)$}
	                \item Construct $\{\bA_{\rho_k}\}_{k = 1}^K$ and $\{\bD_{\rho_k}\}_{k = 1}^K$ 
	                as described, for example, in \citet{finley2019efficient} \hfill{$\mathcal{O}(Knm^3)$}
	                \item Construct $\Tilde{\bX}$ and $\Tilde{\bY}$ in \eqref{eq: argument_liner_LMC} with $\bV_k = \bD_{\rho_k}^{-\frac{1}{2}}(\bI - \bA_{\rho_k})$ \hfill{$\mathcal{O}(nK(m + 1 + q) + npq)$}
	            \end{itemize}
	            \item Use LSMR \citep{fong2011lsmr} to generate sample of $\bF^{(l)}$ 
	            \begin{itemize}
	                \item Sample $\bu \sim \mbox{N}(\mathbf{0}, \bI_{Kn})$ \hfill{$\mathcal{O}(nK)$}
	                \item Solve $\mbox{vec}(\bF)^{(l)}$ from $\tilde{\bX}\mbox{vec}(\bF)^{(l)} = \tilde{\bY} + \bu$ by LSMR
	            \end{itemize}
	            \item Impute missing response $\{\by(\bs_i)_{ms_i}^{(l)}\}_{\bs_i\in \calM}$ over $\calM$ through \eqref{eq: SLMC_YM_cond_post}
	            \begin{itemize}
	                \item Calculate $\bmu_{\bs} = \bbeta^{(l)\T} \bx(\bs) + \bLambda^{(l-1)} \fb(\bs)$ for $\bs \in \calM$
	                \item Sample $\by(\bs)_{ms}^{(l)}$ by \eqref{eq: SLMC_YM_cond_post} for $\bs \in \calM$
	            \end{itemize}
	        \end{itemize}
	        \item Use MNIW to update $\{\bbeta^{(l)}, \bLambda^{(l)}, \bSigma^{(l)}\}$ 
	        \begin{itemize}
	            \item Construct $\bX^{\ast}$ and $\bY^{\ast}$ in \eqref{eq: augment_linear_LMC}
	            \item Generate $\bSigma^{(l)}$
	            \begin{itemize}
	            \item (When $\bSigma$ is a positive symmetric matrix)
	            \begin{itemize}
	                \item Calculate $\bmu^\ast$, $\bV^{\ast-1}$, $\bPsi^\ast$ and $\nu^\ast$ by \eqref{eq: SLMC_MNIW} \hfill{$\mathcal{O}(n(p+K)(p+K+q))$}
	                \item Sample $\bSigma^{(l)}$ from $\mbox{IW}(\bPsi, \nu^\ast)$ 
	            \end{itemize}
	            \item (When $\bSigma$ is diagonal) 
	            \begin{itemize}
	                \item Calculate $\bmu^\ast$ by \eqref{eq: SLMC_MNIW} \hfill{$\mathcal{O}(n(p+K)(p+K+q))$}
	                \item Sample elements of $\bSigma^{(l)}$ from Inverse-Gamma with parameters provided in \eqref{eq: Diag_Sigma_LMC_post}
	            \end{itemize}
	            \end{itemize}
	            \item Sample $\bgamma^{(l)} = [\bbeta^{(l)\top}, \bLambda^{(l)^\top}]^\top$ from $\mbox{MN}(\bmu^\ast, \bV^\ast, \bSigma^{(l)})$
	            \begin{enumerate}
	                \item Sample $\bu \sim \mbox{MN}(\mathbf{0}, \bI_{p+K}, \bI_{q})$
	                \item Calculate Cholesky decomposition  $\bV^{\ast-1} = \bL_{\bV}\bL_{\bV}^\top$ and $\bSigma^{(l)} = \bL_{\bSigma^{(l)}}\bL_{\bSigma^{(l)}}^\top$
	                \item Generate $\bgamma^{(l)} = \bmu^\ast +\bL_{\bV}^{-\top}\bu\bL_{\bSigma^{(l)}}^\top$
	            \end{enumerate}
	        \end{itemize}
	        \item Use Metropolis random walk to update $\{\Psi_k^{(l)}\}_{k = 1}^K$ 
	        \begin{enumerate}
	            \item Propose new $\{\Psi_k^\ast\}_{k = 1}^K$ based on $\{\Psi_k^{(l - 1)}\}_{k = 1}^K$
	            \item Calculate the likelihood of the new proposed  $\{\Psi_k^\ast\}_{k = 1}^K$ and $\{\Psi_k^{(l - 1)}\}_{k = 1}^K$ given $\bF^{(l)}$ using \eqref{eq: SLMC_psi_cond_post}  \hfill{$\mathcal{O}(Knm^3)$}
	            \item Accept the new $\{\Psi_k^\ast\}_{k = 1}^K$ as $\{\Psi_k^{(l)}\}_{k = 1}^K$ with the probability of the ratio of the likelihood of $\{\Psi_k^\ast\}_{k = 1}^K$ and $\{\Psi_k^{(l - 1)}\}_{k = 1}^K$. Let $\{\Psi_k^{(l)}\}_{k = 1}^K$ = $\{\Psi_k^{(l - 1)}\}_{k = 1}^K$ when the new proposal is rejected. 
	        \end{enumerate}
	    \end{enumerate}

		\item Generate posterior samples of $\{\bF_\calU^{(l)}, \bY_\calU^{(l)}\}$ on a new set $\calU$
		\begin{enumerate}
		\item Construct $\Tilde{\bA}$ and $\Tilde{\bD}$ in \eqref{eq: LMC_FU_distr} \hfill{$\mathcal{O}(n'm^3K)$}
		\item Generate $\fb_k(\calU)^{(l)}\sim \mbox{N}(\Tilde{\bA}\fb_k, \Tilde{\bD})$ for $k = 1,\ldots, K$ \hfill{$\mathcal{O}(n'Km)$}
		\item Sample $\bY_\calU^{(l)} \given \bomega_\calU^{(l)}, \bgamma^{(l)}, \bSigma^{(l)}, \bF_\calU^{(l)} \sim \mbox{MN}(\bX_\calU\bbeta + \bF_\calU\bLambda, \bI_{n'}, \bSigma^{(l)})$ \\
		\begin{itemize}
		    \item Sample $\bu \sim \mbox{MN}(\mathbf{0}, \bI_{n'}, \bI_{q})$ \hfill{$\mathcal{O}(n'q)$}
		    \item Generate $\bY_\calU^{(l)} = \bX_\calU\bbeta + \bF_\calU\bLambda + \bu\bL_{\bSigma^\top_{(l)}}$ with $\bF_\calU^{(l)} = [f_1(\calU)^{(l)}: \cdots:f_K(\calU)^{(l)}]$ \hfill{$\mathcal{O}(n'(pq + Kq + q^2)$}
		\end{itemize}
		
		\end{enumerate}
	\end{enumerate}	
	\vspace*{-8pt}
	\rule{\textwidth}{1pt}
} }

\section{Technical details and proofs of results in Section~\ref{sec: Theory}}\label{SM: proofs}
Let us begin with a representation of posterior distributions of the latent model in Section~\ref{sec: Theory}. Let $\bV_{\brho}$ be a non-singular square matrix such that $\brho_{\psi}^{-1}(\calS, \calS) = \bV_{\brho}^\top\bV_{\brho}$. Treat the prior of $\bgamma$ as additional observations and recast $p(\bY, \bgamma \given \bSigma) = p(\bY \given \bgamma, \bSigma) \times p(\bgamma \given \bSigma)$ into an augmented linear model
\begin{equation}\label{eq: augment_linear_latent}
\begin{array}{c}
\underbrace{ \left[ \begin{array}{c} \sqrt{\frac{\alpha}{1 - \alpha}} \bY\\ \bL_r^{-1} \bmu_{\bbeta} \\ \mathbf{0} \end{array} \right]}_{\bY^{*}}
= \underbrace{ \left[ \begin{array}{cc} \sqrt{\frac{\alpha}{1 - \alpha}} \bX & \sqrt{\frac{\alpha}{1 - \alpha}} \bI_n \\ \bL_r^{-1}& \mathbf{0} \\  \mathbf{0}& \bV_{\brho} \end{array} \right] }_{\bX^{*}}
\underbrace{ \left[ \begin{array}{c} \bbeta \\ \bomega \end{array} \right]}_{\bgamma}+ \underbrace{ \left[ \begin{array}{c} \etab_1 \\ \etab_2 \\ \etab_3 \end{array} \right]}_{\etab}
\end{array} ,
\end{equation}
where $\bL_r$ is the Cholesky decomposition of $\bV_r$, and $\etab \sim \mbox{MN}(\mathbf{0}, \bI_{2n + p}, \bSigma)$. When having a flat prior for $\bbeta$, $\bL_r^{-1}$ degenerates to a zero matrix, showing no information from $\bbeta$'s prior contributes to the linear system. The expression in \eqref{eq: augmented_conj_post_v1_pars} can be simplified as
\begin{equation}\label{eq: augmented_conj_post_v2_pars}
\begin{aligned}
\bV^\ast &= (\bX^{\ast\top}\bX^\ast)^{-1}\; , \;
\bmu^\ast = (\bX^{\ast\top}\bX^\ast)^{-1}\bX^{\ast\top}\bY^\ast \; ,\\
\bPsi^\ast &= \bPsi + (\bY^\ast - \bX^\ast \bmu^\ast)^\top(\bY^\ast - \bX^\ast \bmu^\ast)\; , \;
\nu^\ast = \nu + n \; .
\end{aligned}
\end{equation}

We explore the behavior of the above posterior density as the number of observations becomes large under a true data generating distribution. Assume that the true distribution of the dependent variables is included in the parametric family $f(\mathbf{Y}) = p(\mathbf{Y} \given \boldsymbol{\beta}_0, \boldsymbol{\Sigma}_0)$ for some $\boldsymbol{\Sigma}_0$ and $\boldsymbol{\beta}_0$. For distinguishing the variables based on the number of observations, we make the dependence upon $n$ explicit. Denote $\bX(n)_{n \times p} = [\bx(\bs_1): \cdots: \bx(\bs_n)]^\top$, $\bY(n)_{n \times q} = [\by(\bs_1): \cdots: \by(\bs_n)]^\top$,  $\calS(n) = \{\bs_1, \ldots, \bs_n\}$, $\bkappa(n) = \bC(\calS(n), \calS(n)) + (\alpha^{-1} - 1)\bI_n$. $\bX^\ast(n)$ and $\bY^\ast(n)$ are $\bX^\ast$ and $\bY^\ast$ in \eqref{eq: augment_linear_latent} using $\bX(n)$ and $\bY(n)$ instead of $\bX$ and $\bY$. {In the following results, we denote $\mathbf{J}(n) =\bX(n)^\top \bkappa(n)^{-1}\bX(n)$, $\bA \geq \mathbf{B}$ to mean that $\mathbf{A} - \mathbf{B}$ is a positive semi-definite matrix, and $\mathbf{A}_{ij}$ to be the $(i,j)$-th element of $\mathbf{A}$.}

\begin{lemma}\label{lemma1}
The matrix $\bSigma$ in the conjugate multivariate models is posterior consistent
if and only if $\boldsymbol{\Psi}^\ast(n)_{ij}/ n \to \{\bSigma_{0}\}_{ij} \; a.s.$ for $1 \leq i, j \leq q$ with  $\boldsymbol{\Psi}^\ast(n)$ defined by 
\eqref{eq: collapsed_spatial_pars1} \& \eqref{eq: augmented_conj_post_v1_pars}
\end{lemma}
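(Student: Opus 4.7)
The plan is to leverage the explicit form of the posterior for $\bSigma$, which in both the response model and the latent model (after marginalizing $\bgamma$ or equivalently reading off the $\mbox{MNIW}$ posterior) is $\mbox{IW}(\bPsi^\ast(n),\,\nu + n)$. Posterior consistency at $\bSigma_0$ is equivalent to weak convergence of this sequence of inverse-Wishart distributions to the Dirac mass at $\bSigma_0$. Since the inverse-Wishart is characterized by its scale matrix and degrees of freedom, and $\nu^\ast(n)=\nu+n$ is known to diverge, the question reduces to the asymptotic behavior of $\bPsi^\ast(n)$.

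The first step is to recall the standard moment formulas for $\bSigma\sim \mbox{IW}(\bPsi^\ast(n),\nu^\ast(n))$, namely
\[
\mbox{E}[\bSigma_{ij}\mid \bY(n)] \;=\; \frac{\bPsi^\ast(n)_{ij}}{\nu + n - q - 1},
\]
and a variance of order $\bPsi^\ast(n)_{ij}^2/(\nu+n)^3 + \bPsi^\ast(n)_{ii}\bPsi^\ast(n)_{jj}/(\nu+n)^3$, valid for $n$ large enough that $\nu+n>q+3$. For the sufficiency direction, assume $\bPsi^\ast(n)_{ij}/n \to (\bSigma_0)_{ij}$ almost surely for each $i,j$. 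Then the posterior mean of each entry converges almost surely to $(\bSigma_0)_{ij}$, while the posterior variance is of order $1/n$ and hence vanishes. Chebyshev's inequality applied to each entry then gives that for every $\epsilon>0$,
\[
\Pr\!\bigl(|\bSigma_{ij}-(\bSigma_0)_{ij}|>\epsilon \,\bigm|\,\bY(n)\bigr)\;\longrightarrow\;0 \qquad a.s.,
\]
which is exactly entrywise posterior consistency of $\bSigma$ at $\bSigma_0$.

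For the necessity direction, suppose the posterior of $\bSigma$ concentrates at $\bSigma_0$. Because inverse-Wishart tails are sufficiently light (the first two posterior moments exist for all $n$ large enough and are uniformly bounded once we restrict to a neighborhood of $\bSigma_0$), posterior consistency forces convergence of the posterior mean to $\bSigma_0$; this is the only delicate step and can be handled by a standard uniform-integrability argument using the explicit form of the density (or, more pedestrianly, by observing that if the mean did not converge then some subsequence of the means would stay bounded away from $(\bSigma_0)_{ij}$, contradicting tightness around $\bSigma_0$). From $\mbox{E}[\bSigma_{ij}\mid \bY(n)]\to (\bSigma_0)_{ij}$ and $\nu+n-q-1\sim n$, rearrangement yields $\bPsi^\ast(n)_{ij}/n \to (\bSigma_0)_{ij}$ almost surely.

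The main obstacle will be the necessity direction: making the step from posterior concentration to convergence of the posterior mean fully rigorous requires verifying uniform integrability of the entries $\bSigma_{ij}$ under the sequence of inverse-Wishart posteriors. I would address this by a direct moment bound, using that the second posterior moment of $\bSigma_{ij}$ is $O(1)$ (the formula above shows it is bounded whenever $\bPsi^\ast(n)/n$ stays bounded, which is implied by tightness at $\bSigma_0$). With that in place, the equivalence is immediate, and the lemma is proved for both the response model (where the posterior is exactly $\mbox{MNIW}$ as in \eqref{eq: collapsed_spatial_pars1}) and the latent model (where the marginal posterior of $\bSigma$ after integrating out $\bgamma$ has the same form by \eqref{eq: augmented_conj_post_v2_pars}), so that only the sequence $\bPsi^\ast(n)$ differs between the two cases.
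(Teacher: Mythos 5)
Your proposal is correct and follows essentially the same route as the paper's proof: sufficiency via the inverse-Wishart posterior moments together with Chebyshev's inequality, and necessity by extracting convergence of the posterior mean $\bPsi^\ast(n)_{ij}/(\nu^\ast(n)-p-1)$ from posterior concentration. The one place you are more careful than the paper is the necessity step, where the paper passes from $\mbox{Pr}(|\bSigma_{ij}-\bSigma_{0ij}|>\epsilon\given\bY(n))\to 0$ to $\mathbb{E}(|\bSigma_{ij}-\bSigma_{0ij}|\given\bY(n))<\epsilon$ without comment, whereas you correctly flag that this requires a uniform-integrability (second-moment) bound for the inverse-Wishart sequence.
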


\begin{proof}
The conjugate multivariate models yield $\bSigma \given \bY(n) \sim \mbox{IW}(\boldsymbol{\Psi}^\ast(n), \nu^\ast(n))$ with
$$
\mathbf{M}_{ij} = \mathbb{E}(\bSigma_{ij} \given \bY(n)) =\frac{\bPsi^\ast(n)_{ij}}{c - 1}\; , \; 
\mbox{Var}(\bSigma_{ij} \given \bY(n)) = \frac{(c + 1)\bPsi^\ast(n)^2_{ij} + (c - 1)\bPsi^\ast(n)_{ii}\bPsi^\ast(n)_{jj}}{c(c - 1)^2(c-3)}\;,
$$
where $c = \nu^\ast(n) - p$, $\boldsymbol{\Psi}^\ast(n)$ and $\nu^\ast(n)$ are defined in \eqref{eq: collapsed_spatial_pars1} and \eqref{eq: augmented_conj_post_v1_pars} for the response and latent process models, respectively. 

Necessity: If $\bSigma$ is posterior consistent, i.e., for any $\epsilon > 0$
$$
\mbox{lim}_{n \to \infty}\mbox{Pr}(|\bSigma_{ij} - \bSigma_{0ij}| > \epsilon \given \bY(n)) = 0\; \text{for } 1 \leq i, j \leq q\;,
$$ 
then $\mbox{lim}_{n \to \infty}\mathbb{E}(\bSigma_{ij} - \bSigma_{0ij} \given \bY(n)) \leq  \mbox{lim}_{n \to \infty} \mathbb{E}(|\bSigma_{ij} - \bSigma_{0ij}| \given \bY(n)) < \epsilon$ for any $\epsilon > 0$; hence, $\mbox{lim}_{n \to \infty}\mathbb{E}(\bSigma_{ij} \given \bY(n)) = \bSigma_{0ij}$ a.s. Therefore, $\boldsymbol{\Psi}^\ast(n)_{ij}/ n \to \bSigma_{0ij} \; a.s.$ for $1 \leq i, j \leq q$.


Sufficiency: If $\boldsymbol{\Psi}^\ast(n)_{ij}/ n \to \bSigma_{0ij} \; a.s.$ for $1 \leq i, j \leq q$, then, from the posterior distribution of $\bSigma$ directly obtain $\mbox{lim}_{n \to \infty}\mathbb{E}(\bSigma_{ij} \given \bY(n)) = \bSigma_{0ij}$ and the variance of each element converges to $0$ at the rate of $1/n$. Using the triangle and Chebyshev's inequalities, for any $\epsilon > 0$ we obtain $\displaystyle \mbox{Pr}(|\bSigma_{ij} - \bSigma_{0ij}| > \epsilon \given \bY(n)) \leq \mbox{Pr}(|\bSigma_{ij} - \mathbf{M}_{ij}| > \epsilon/2 \given \bY(n)) + \mbox{Pr}(|\mathbf{M}_{ij} - \bSigma_{0ij}| > \epsilon/2 \given \bY(n)) \leq 4\mbox{Var}(\bSigma_{ij} \given \bY(n)) / \epsilon^2 + \mbox{Pr}(|\mathbf{M}_{ij} - \bSigma_{0ij}| > \epsilon/2 \given \bY(n)) \to 0 \; a.s.$.
\end{proof}

\begin{theorem}\label{thmS1}
The matrix $\boldsymbol{\Sigma}$ in the conjugate multivariate models is posterior consistent.
\end{theorem}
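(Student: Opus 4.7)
The plan is to apply Lemma~\ref{lemma1} and verify its characterising condition, namely that $\bPsi^\ast(n)_{ij}/n \to \{\bSigma_0\}_{ij}$ almost surely under the true data-generating distribution, for both the response and the latent models. The key algebraic step is to recognise $\bPsi^\ast(n) - \bPsi$ as the residual sum-of-squares matrix in an augmented linear system. For the response model, with $\bL_\kappa \bL_\kappa^\top = \bkappa(n)$ and $\bL_r \bL_r^\top = \bV_r$, define $\tilde{\by}(n) = [\bY(n)^\top \bL_\kappa^{-\top},\; \bmu_{\bbeta}^\top \bL_r^{-\top}]^\top$ and $\tilde{\bX}(n) = [\bX(n)^\top \bL_\kappa^{-\top},\; \bL_r^{-\top}]^\top$; then $\bmu^\ast$ is the OLS fit in this system, $\tilde{\bX}^\top\tilde{\bX} = \bV^{\ast-1}$, and
\[
\bPsi^\ast(n) - \bPsi \;=\; \tilde{\by}^\top(\bI - \tilde{\bP})\tilde{\by},
\]
where $\tilde{\bP}$ is the orthogonal projection onto the $p$-dimensional column space of $\tilde{\bX}$, so $\bI - \tilde{\bP}$ is a projection of rank $n$. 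The analogous manipulation for the latent model, applied to the augmented system already displayed in the proof of Lemma~\ref{lemma1}, gives a projection $\bI - \tilde{\bP}$ of rank $n$ inside a $(2n+p)$-dimensional ambient space.

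Because $(\bI - \tilde{\bP})\tilde{\bX} = \mathbf{0}$, this quadratic form depends only on the noise vector $\etab(n) := \tilde{\by}(n) - \tilde{\bX}(n)\bgamma_0$, where $\bgamma_0$ collects the true parameters. Under the data-generating distribution I will show that $\etab(n)$ decomposes as a Gaussian stochastic part $\bG(n)\bL_{\bSigma_0}^\top$, with $\bG(n)$ having iid standard-normal entries on the rows corresponding to the likelihood, plus a deterministic $O(1)$ contribution $\bL_r^{-1}(\bmu_{\bbeta} - \bbeta_0)$ on the $p$ rows enforcing the prior for $\bbeta$. For the latent model, the rows enforcing the $\bomega$-prior also contribute iid Gaussians after noting that $\bV_{\brho}\bomega_0 \sim \mbox{MN}(\mathbf{0},\bI_n,\bSigma_0)$, because $\bV_{\brho}\brho_{\psi}(\calS,\calS)\bV_{\brho}^\top = \bI_n$ and Gaussian distributions are invariant under orthogonal transformations. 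Expanding $\etab^\top(\bI - \tilde{\bP})\etab$ produces a dominant Gaussian quadratic form $\bL_{\bSigma_0}\bG^\top\bM\bG\bL_{\bSigma_0}^\top$, where $\bM$ is the block of $\bI - \tilde{\bP}$ indexed by the Gaussian rows, together with a Gaussian-times-deterministic cross term and a purely deterministic term. Since $\mbox{tr}(\bI - \tilde{\bP}) = n$ and the block on the deterministic rows has trace at most $p = O(1)$, one has $\mbox{tr}(\bM) = n - O(1)$, whence $\mathbb{E}[\bG^\top\bM\bG]/n = (\mbox{tr}(\bM)/n)\bI_q \to \bI_q$. The cross term has mean zero and entrywise variance $O(1)$, since the off-diagonal block of $\bI - \tilde{\bP}$ has operator norm at most $1$ and the deterministic correction is $O(1)$, so it is negligible after dividing by $n$; the purely deterministic term is $O(1)$.

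The main obstacle is upgrading convergence in mean to almost-sure convergence for the dominant term. I will invoke the Hanson--Wright inequality: using $\|\bM\|_{\mathrm{op}} \leq 1$ and $\|\bM\|_F^2 \leq \mbox{tr}(\bM) \leq n$, each entry $(\bG^\top\bM\bG)_{jk}$ deviates from its mean by more than $\epsilon n$ with probability at most $2\exp(-c\epsilon^2 n)$; these probabilities are summable in $n$, so Borel--Cantelli gives $\bG^\top\bM\bG/n \to \bI_q$ almost surely, entrywise. Combined with the vanishing of the lower-order terms, this yields $\bPsi^\ast(n)/n \to \bSigma_0$ almost surely, and applying the sufficient direction of Lemma~\ref{lemma1} completes the argument for both models. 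The trickiest part of the execution will be carefully tracking the block structure of $\bI - \tilde{\bP}$ for the latent model, where the ambient space splits into three blocks of sizes $n$, $p$, $n$, and verifying the orthogonal-invariance reduction used to identify the latent-model noise on the $\bomega$-prior rows.
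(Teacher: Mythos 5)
Your proposal is correct and rests on the same core identity as the paper's proof --- namely that $\bPsi^\ast(n)-\bPsi$ is the residual sum-of-squares matrix $\bu(n)^\top(\bI-\bH^\ast(n))\bu(n)$ of the augmented linear system, with $\bI-\bH^\ast(n)$ an annihilator of rank $n$ --- but you execute the almost-sure limit by a genuinely different probabilistic route. The paper expands the quadratic form as $\bu^\top\bu-\bu^\top\bH^\ast\bu$, treats \emph{all} $2n+p$ augmented rows of $\bu(n)$ (including the $p$ rows encoding the prior on $\bbeta$) as $\mbox{MN}(\mathbf{0},\bI_{2n+p},\bSigma_0)$, and applies the strong law twice, once after rotating by an orthogonal $\bQ(n)$ that diagonalizes $\bH^\ast(n)$. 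You instead keep the annihilator quadratic form intact, split the rows into a Gaussian block and the deterministic $O(1)$ block $\bL_r^{-1}(\bmu_{\bbeta}-\bbeta_0)$, and control the dominant term $\bG^\top\bM\bG/n$ via Hanson--Wright plus Borel--Cantelli using $\|\bM\|_{\mathrm{op}}\le 1$, $\|\bM\|_F^2\le\mbox{tr}(\bM)\le n$ and $\mbox{tr}(\bM)=n-O(1)$. Your version buys two things the paper's argument glosses over: (i) it does not pretend the $\bbeta$-prior residual is matrix-normal with column covariance $\bSigma_0$ when $\bbeta_0$ is a fixed truth, and (ii) it sidesteps the triangular-array issue in applying the SLLN to the rows of $\bv(n)=\bQ(n)\bu(n)$, which change with $n$; the summable exponential tails make Borel--Cantelli legitimate at each $n$ separately. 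The price is a heavier concentration tool where the paper uses only the law of large numbers. One small repair: the identification $\bV_{\brho}\bomega_0\sim\mbox{MN}(\mathbf{0},\bI_n,\bSigma_0)$ follows from the matrix-normal transformation rule $\bV_{\brho}\,\brho_{\psi}(\calS,\calS)\,\bV_{\brho}^\top=\bI_n$ alone; $\bV_{\brho}$ is a whitening map, not an orthogonal one, so the appeal to orthogonal invariance should be dropped. Everything else, including the bookkeeping of the three row-blocks of sizes $n$, $p$, $n$ in the latent model and the $O(1/n)$ disposal of the cross and deterministic terms, is sound.
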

\begin{proof}
From \eqref{eq: augment_linear_latent}, it follows that $\mathbf{u}(n) \given \boldsymbol{\Sigma}_0 \sim \mathrm{MN}_{ (2n+p) \times q}(\mathbf{0}, \mathbf{I}_{2n + p}, \boldsymbol{\Sigma}_0)$, 
where $\mathbf{u}(n) = \mathbf{Y}^\ast(n) - \mathbf{X}^\ast(n)\bgamma$ and we write $\boldsymbol{\Psi}^\ast(n) / n =  \boldsymbol{\Psi}/n + \frac{1}{n}\bu(n)^{\top}\left(\bI_n - \bH^{\ast}(n)\right)\bu(n) 
$, where $\bH^{\ast}(n) = \mathbf{X}^\ast(n) \{\mathbf{X}^{\ast}(n)^\top\mathbf{X}^\ast(n)\}^{-1} \mathbf{X}^{\ast}(n)^\top$ is idempotent with rank $p+n$. Writing $\bH^{\ast}(n) = \mathbf{Q}(n)^\top\tilde{\bI}\mathbf{Q}(n)$, where $\mathbf{Q}(n)$ is an orthogonal matrix and $\displaystyle \tilde{\bI} = \begin{bmatrix}
\mathbf{I}_{p+n} & \mathbf{O}\\ 
\mathbf{O} & \mathbf{O}\end{bmatrix}$ and letting $\mathbf{v}(n) = \mathbf{Q}(n)\mathbf{u}$, we obtain $\mathbf{v}(n) \sim \mathbf{MN}_{(2n+p) \times q}(\mathbf{0}, \mathbf{I}_{2n + p}, \boldsymbol{\Sigma}_0)$. Also, we know $\mathrm{lim}_{n \to \infty} \frac{2}{2n} \{\mathbf{u}(n)^\top\mathbf{u}(n)\}_{ij} = 2\bSigma_{0ij}$ a.s. for $1 \leq i,j, \leq q$ and $\displaystyle \mathrm{lim}_{n \rightarrow \infty} \frac{1}{n} \{\mathbf{u}(n)^\top \bH^{\ast}(n)\mathbf{u}(n)\}_{ij} = \mathrm{lim}_{n \rightarrow \infty}\frac{1}{n} \sum_{l = 1}^p \{\bv_l^\top \bv_l\}_{ij} = \bSigma_{0ij}\; a.s.$ from the Khinchin-Kolmogorov strong law of large numbers,
where $\bv_j$ is the $j$-th column of $\bv(n)$. 
Hence, $\mathrm{lim}_{n \rightarrow \infty} \bPsi^\ast(n)_{ij} / n = \bSigma_{0ij}$ a.s. and the result follows from Lemma~\ref{lemma1}.
\end{proof}

\begin{theorem}\label{thmS2}
The regression slopes $\bbeta$ is posterior consistent for both conjugate models if and only if
$\mbox{lim}_{n \to \infty}\lambda_{\min}\{\mathbf{J}(n)\} = \infty$, where $\lambda_{\min}\{\mathbf{J}(n)\}$ is the smallest eigenvalue of $\mathbf{J}(n)$. 
\end{theorem}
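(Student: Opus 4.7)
I plan to first observe that the two models produce the same marginal posterior on $(\bbeta,\bSigma)$, so it suffices to prove the theorem once. Integrating $\bomega\,|\,\bSigma \sim \mathrm{MN}(\mathbf{0}, \brho_{\psi}(\calS,\calS), \bSigma)$ out of the latent-model likelihood yields $\bY\,|\,\bbeta,\bSigma \sim \mathrm{MN}(\bX\bbeta, \brho_{\psi}(\calS,\calS) + (\alpha^{-1}-1)\bI_n, \bSigma) = \mathrm{MN}(\bX\bbeta, \bkappa(n), \bSigma)$, which is exactly the response-model likelihood. Since both models use the same MNIW prior on $(\bbeta,\bSigma)$, the marginal posterior of $(\bbeta,\bSigma)$ is identical in the two models, so I may work only with the response model.

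\textbf{Sufficiency.} For the response model, \eqref{eq: collapsed_spatial_pars1} gives $\bbeta\,|\,\bSigma,\bY \sim \mathrm{MN}_{p,q}(\bmu^{\ast}, \bV^{\ast}, \bSigma)$ with $\bV^{\ast}(n) = (\mathbf{J}(n) + \bV_r^{-1})^{-1}$, so the conditional posterior variance of $\bbeta_{ij}$ equals $\bV^{\ast}(n)_{ii}\,\bSigma_{jj}$. When $\lambda_{\min}\{\mathbf{J}(n)\}\to\infty$, the fixed positive definite term $\bV_r^{-1}$ becomes negligible and $\|\bV^{\ast}(n)\|_{\mathrm{op}}\to 0$; combined with Theorem~\ref{thmS1}, which controls $\bSigma_{jj}$ in the posterior, the marginal posterior variance of every $\bbeta_{ij}$ vanishes. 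To locate the center I substitute $\bY(n) = \bX(n)\bbeta_0 + \bE(n)$ with $\bE(n)\sim\mathrm{MN}(\mathbf{0},\bkappa(n),\bSigma_0)$ under the truth into \eqref{eq: collapsed_spatial_pars1}, giving
\begin{equation*}
\bmu^{\ast}(n) = (\mathbf{J}(n) + \bV_r^{-1})^{-1}\bigl[\mathbf{J}(n)\bbeta_0 + \bX(n)^{\top}\bkappa(n)^{-1}\bE(n) + \bV_r^{-1}\bmu_{\bbeta}\bigr].
\end{equation*}
The first bracketed summand converges to $\bbeta_0$ and the third to $\mathbf{0}$. The middle summand has mean $\mathbf{0}$ and second-moment matrix $(\mathbf{J}+\bV_r^{-1})^{-1}\mathbf{J}(\mathbf{J}+\bV_r^{-1})^{-1}\otimes\bSigma_0 \preceq (\mathbf{J}+\bV_r^{-1})^{-1}\otimes\bSigma_0\to\mathbf{0}$, so it vanishes in $L^2$. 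Thus the posterior of $\bbeta$ concentrates on $\bbeta_0$.

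\textbf{Necessity and main obstacle.} Conversely, if $\lambda_{\min}\{\mathbf{J}(n)\}$ fails to diverge, then along some subsequence $\lambda_{\max}(\bV^{\ast}(n_k))\geq c > 0$, so $\mathrm{tr}(\bV^{\ast}(n_k))\geq c$ and by the pigeonhole principle there is an index $i$ with $\bV^{\ast}(n_k)_{ii}\geq c/p$ infinitely often. The marginal posterior variance of $\bbeta_{ij}$ is then bounded below by $(c/p)\,\mathbb{E}[\bSigma_{jj}\,|\,\bY(n_k)]$, which by Theorem~\ref{thmS1} converges to $(c/p)\,\{\bSigma_0\}_{jj}>0$, contradicting posterior consistency of $\bbeta$. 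The principal obstacle is the transfer from the conditional posterior of $\bbeta\,|\,\bSigma,\bY$ to the marginal posterior of $\bbeta\,|\,\bY$: since both the conditional mean and conditional variance depend on $\bSigma$ only through a scale factor, Theorem~\ref{thmS1}'s consistency of $\bSigma$, together with uniform integrability afforded by the inverse-Wishart posterior with $\nu^{\ast}\to\infty$, is precisely what lets one deduce marginal concentration of $\bbeta$ from the operator-norm behaviour of $\bV^{\ast}$.
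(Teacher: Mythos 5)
Your proof is correct and, at its core, follows the same strategy as the paper's: reduce posterior consistency of $\bbeta$ to the vanishing of the diagonal entries of $\bV^{\ast}(n)$, use the consistency of $\bSigma$ (Theorem~\ref{thmS1}) to control the remaining scale factor, and then translate $\{\bV^{\ast}(n)\}_{ii}\to 0$ into $\lim_{n\to\infty}\lambda_{\min}\{\mathbf{J}(n)\}=\infty$ via $\bV^{\ast-1}(n)=\mathbf{J}(n)+\bV_r^{-1}$. The execution differs in three places. First, the paper works with the marginal matrix-t posterior of $\bbeta\given\bY(n)$ and cites Eicker's theorem for the equivalence between $\{\bV^{\ast}(n)\}_{ii}\to 0$ for all $i$ and $\lambda_{\min}\{\bV^{\ast-1}(n)\}\to\infty$; you instead condition on $\bSigma$, apply the tower property, and replace the citation with an elementary operator-norm bound for sufficiency and a trace/pigeonhole argument for necessity, which makes the proof self-contained. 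Second, the paper merely asserts that the posterior mean is an unbiased estimator of $\bbeta_0$, which by itself does not give concentration; your explicit decomposition of $\bmu^{\ast}(n)$ into a term converging to $\bbeta_0$, a vanishing prior term, and a mean-zero noise term with second moment dominated by $(\mathbf{J}(n)+\bV_r^{-1})^{-1}\otimes\bSigma_0$ supplies the argument the paper leaves implicit, though note that it delivers convergence in probability rather than the almost-sure mode used elsewhere in the appendix. Third, you make explicit the reduction that the response and latent models share the same marginal posterior for $(\bbeta,\bSigma)$ after integrating out $\bomega$, which the paper uses tacitly. The one step you share with the paper that deserves flagging is the inference, in the necessity direction, from posterior consistency to vanishing posterior variance: this is not automatic and requires uniform integrability of the posterior second moments; you correctly identify the matrix-t/inverse-Wishart structure with $\nu^{\ast}(n)\to\infty$ as the source of it, whereas the paper asserts the implication without comment.
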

\begin{proof}
The augmented linear system \eqref{eq: augment_linear_latent} implies that the marginal posterior mean of $\bbeta$ is an unbiased estimator of $\bbeta_0$ with respect to the true distribution of $\bY(n)$. When $\bbeta$ is posterior consistent, 
$\lim_{n \to \infty}\mbox{Var}(\bbeta_{ij} \given \bY(n)) = 0$ a.s with respect to the true distribution of $\bY(n)$. Moreover, $\lim_{n \to \infty}\mbox{Var}(\bbeta_{ij} \given \bY(n)) = 0$ a.s. is a sufficient condition for the posterior consistency of $\bbeta$ through Chebyshev's inequality. In the conjugate model, $\bbeta \given \bY(n) \sim {{\mathrm{T}}}_{{p,q}}(\nu^\ast(n) - q + 1,\bmu^\ast(n), \bV^\ast (n) ,\bPsi^\ast(n))$ with parameters given in \eqref{eq: collapsed_spatial_pars1}. From Theorem~\ref{thm1}, $\lim_{n \to \infty}\bPsi^\ast(n)_{ij} / n = \bSigma_{0ij}$ a.s. for $1 \leq i,j, \leq q$, hence $\lim_{n \to \infty}\mbox{Var}(\bbeta_{ij} \given \bY(n)) = 0$ a.s. if and only if $\lim_{n \to \infty}\{\mathbf{V}^{\ast}(n)\}_{ii} = 0$ for all $i = 1, \ldots, q$. Following \citet{eicker1963asymptotic} (see his proof of Theorem~1), the sufficient and necessary condition is $\lim_{n \to \infty}\lambda_{\min}\{\mathbf{V}^{\ast-1}(n)\}= \infty$. Since $\lambda_{\min}\{\mathbf{J}(n)\} + \lambda_{\max}\{\mathbf{V_r}^{-1}\} \geq \lambda_{\min}\{\mathbf{V}^{\ast-1}(n)\} = \lambda_{\min}\{\mathbf{J}(n) + \mathbf{V_r}^{-1} \} \geq \lambda_{\min}\{\mathbf{J}(n)\}$, the condition simplifies to $\lim_{n \to \infty}\lambda_{\min}\{\mathbf{J}(n)\} = \infty$.
\end{proof}

The following remarks reveal that the posterior consistency in Theorem~\ref{thm1} satisfies with common conditions. 
\begin{remark}\label{remarkS1}
$\lambda_{\min}\{\mathbf{J}(n)\}$ is non-decreasing, and when  $\bbeta$ is posterior consistent, $\lim_{n \to \infty}\mathbf{J}(n)_{ii} = \infty$ since $\mathbf{J}(n)_{ii} \geq \lambda_{\min}\{\mathbf{J}(n)\}$
\end{remark}
\begin{proof}
Let $\mathbf{X}(n+1) = [\mathbf{X}(n)^\top, x_{n+1}]^\top$, $\bkappa(n+1) = \begin{bmatrix}
\bkappa(n) & \bkappa_{(n), n+1}\\ \bkappa_{n+1, (n)} & \alpha^{-1} \end{bmatrix}$. Then
\begin{equation}
\begin{aligned}
\mathbf{J}(n+1) &= [\mathbf{X}(n)^\top, x_{n+1}] \begin{bmatrix}
\bkappa(n) & \bkappa_{(n), n+1}\\ \bkappa_{n+1, (n)} & \alpha^{-1} \end{bmatrix}^{-1} \begin{bmatrix}
\mathbf{X}(n)\\ x_{n+1}\end{bmatrix} \\
&= \mathbf{J}(n) + \{\mathbf{X}(n)^\top \bkappa(n)^\top \bkappa_{(n), n+1} - x_{n+1}\}d\{\bkappa_{n+1, (n)}\bkappa(n)\mathbf{X}(n) - x_{n+1}\}\\
&=\mathbf{J}(n) + \mathbf{A}(n)\; ,
\end{aligned}
\end{equation}
where $d = \{\alpha^{-1} - \bkappa_{n+1, (n)}\bkappa(n)^{-1}\bkappa_{(n), n+1}\} > 0$ and $\mathbf{A}(n)$ is positive semi-definite symmetric matrix. Thus, $\lambda_{\min}\{\mathbf{J}(n+1)\} = \lambda_{\min}\{\mathbf{J}(n) + \mathbf{A}(n)\} \geq \lambda_{\min}\{\mathbf{J}(n)\}$.
\end{proof}

\begin{remark}\label{remarkS2}
When $\mathbf{X}(n) \sim \mbox{MN}(\mathbf{0}, \bkappa(n), \bSigma^\ast)$ for some 
$\bSigma^\ast$, $\bbeta$ is posterior consistent.
\end{remark}
\begin{proof}
Let $\bkappa(n)^{-\frac{1}{2}}$ be the square root of $\bkappa(n)^{-1}$. Then, $\bkappa(n)^{-\frac{1}{2}}\mathbf{X}(n) \sim \mathrm{MN}(\mathbf{0}, \mathbf{I}_n, \boldsymbol{\Sigma}^\ast)$ and the strong law of large numbers ensures $\lim_{n \to \infty} \{\frac{1}{n} \mathbf{J}(n)\}_{ij} 
= \bSigma^*_{ij} \; a.s. \; \text{for}\; 1 \leq i,j \leq p$. Hence, $\lambda_{\min}\{\mathbf{J}(n)\} \rightarrow \infty$.
\end{proof}

\begin{remark}\label{remarkS3}
 If $\mathbf{X}(n) \sim \mbox{MN}(\mathbf{0}, \mathbf{I}_n, \bSigma^\ast)$ for some 
$\bSigma^\ast$, then $\bbeta$ is posterior consistent
\end{remark}
\begin{proof}
For any $n$, there exists an orthogonal matrix $\mathbf{Q}(n)$ and a diagonal matrix $\mathbf{D}(n)$ with diagonal entries $d_i$ for $i=1,2,\ldots,n$ such that $\mathbf{C}(\calS(n), \calS(n)) = \mathbf{Q}(n)^\top \mathbf{D}(n) \mathbf{Q}(n)$. This yields
\begin{equation}
\begin{aligned}
    \mathbf{J}(n) &= \mathbf{X}(n)^{\top}\mathbf{Q}(n)^\top\{\mathbf{D}(n) + (\alpha^{-1} - 1) \mathbf{I}_n\}^{-1} \mathbf{Q}(n)\mathbf{X}(n)\\
    &= \mathbf{Z}(n)^{\top}\mbox{diag}\left[\{\frac{1}{d_i+ (\alpha^{-1} - 1)}\}_{i = 1}^n\right]\mathbf{Z}(n) = \sum_{i = 1}^n \frac{1}{d_i + (\alpha^{-1} - 1)}\bz_i \bz_i^\top\;,
\end{aligned}
\end{equation}
where $\mathbf{Z}(n) = [\bz_i: \cdots : \bz_n]^\top \sim \mathrm{MN}(\mathbf{0}, \mathbf{I}_n, \boldsymbol{\Sigma}^\ast)$ and $\sum_{i = 1}^n d_i= n$, $d_i \geq 0, i = 1, \ldots, n$. Letting $\bV_i = \bz_i \bz_i^\top$ for $i = 1, \ldots, n$ and applying the matrix version of Cauchy-Schwarz inequality \citep[see, e.g., equation 4 in][]{marshall1990matrix} we obtain
\begin{align*}
\sum_{i = 1}^n \{d_i + (\alpha^{-1} - 1)\} \sum_{i = 1}^n\frac{1}{d_i + (\alpha^{-1} - 1)} \bV_i \geq 
\left\{\sum_{i = 1}^n \sqrt{d_i + (\alpha^{-1} - 1)} \frac{ \bV_i^{\frac{1}{2}}}{\sqrt{d_i + (\alpha^{-1} - 1)}}\right\}^2\;,
\end{align*}
where $\bV_i^{\frac{1}{2}}\bV_i^{\frac{1}{2}} = \bV_i$, and, hence, $\displaystyle \mathbf{J}(n) \geq \alpha\left\{ \frac{\sum_{i = 1}^n \bV_i^{\frac{1}{2}}}{\sqrt{n}}\right\}^2$.
Letting $\bV_i = \bz_i \bz_i^\top = \lambda_i \bu_i \bu_i^\top$ where $\lambda_i = \bz_i^\top \bz_i = \|\bz_i\|^{2}$ and $\bu_i = \frac{\bz_i}{\|\bz_i\|}$, we have $\bV_i^{\frac{1}{2}} = \sqrt{\lambda_i}\bu_i \bu_i^\top$. Changing $n$ into $np$ and rewriting $\sum_{i = 1}^n \bV_i^{\frac{1}{2}}$ into $\sum_{i = 1}^n\sum_{k = 1}^p \bV_{ik}^{\frac{1}{2}}$, where $\{\sum_{k = 1}^p \bV_{ik}^{\frac{1}{2}}\}$ for each $i$ is a full rank $p \times p$ matrix with probability 1, we obtain
\begin{equation}\label{eq: Remake2.3_2}
    \mathbf{J}(np) \geq  \frac{\alpha}{p}\left( \frac{\sum_{i = 1}^n\sum_{k = 1}^p \sqrt{\lambda_{ik}}\bu_{ik}\bu_{ik}^\top }{\sqrt{n}}\right)^2
\end{equation}
We now argue that the smallest eigenvalue of the matrix on the right side goes to infinity as $n \rightarrow \infty$, which will imply that $\lambda_{\min}\{\mathbf{J}(np)\}\rightarrow \infty$. Since $\sum_{k = 1}^p \mathbf{V}_{ik}^{\frac{1}{2}} = \sum_{k = 1}^p \sqrt{\lambda_{ik}}\bu_{ik}\bu_{ik}^\top \sim \mathbf{W}_p(\bSigma^{\ast\frac{1}{2}}, p)$, where $\mathbf{W}_p$ is Wishart distribution, $\{\bu_{i1}, \ldots, \bu_{ip}\}$ make up the bases of the space $\mathbb{R}^p$ with probability 1. For any $\bu \in \mathcal{R}^p, \|\bu\| = 1$, we have 
$$
    \bu^\top \sum_{k = 1}^p \left( \sqrt{\lambda_{ik}} \bu_{ik} \bu_{ik}^\top \right )\bu \geq \underset{k = 1, \ldots, p}{\min}(\sqrt{\lambda_{ik}})
$$
Hence, $\lambda_{\min}(\sum_{i = 1}^n\sum_{k = 1}^p \sqrt{\lambda_{ik}}\bu_{ik}\bu_{ik}^\top) \geq \sum_{i = 1}^n \underset{k = 1, \ldots, p}{\min}(\sqrt{\lambda_{ik}})$. Since $\lambda_{ik} = \|\bz_{ik}\|^2$ where $\bz_{ik} \sim \mbox{N}(\mathbf{0}, \bSigma^\ast)$, $\underset{k = 1, \ldots, p}{\min}\{\sqrt{\lambda_{ik}}\}$ are independent and identically distributed with a positive mean 
$\mathbb{E}(\underset{k = 1, \ldots, p}{\min}\{\sqrt{\lambda_{ik}}\}) = c^\ast > 0$ and a finite variance $\sigma^{2\ast}$. By law of large numbers, we have
$\lim_{n \to \infty}\sum_{i = 1}^n\underset{k = 1, \ldots, p}{\min}\{\sqrt{\lambda_{ik}}\} / n = c^\ast
$ a.s.. Therefore, 
$$
    \lambda_{\min}\left\{\left( \frac{\sum_{i = 1}^n\sum_{k = 1}^p \sqrt{\lambda_{ik}}\bu_{ik}\bu_{ik}^\top }{\sqrt{n}}\right)^2\right\} 
    \geq \frac{1}{n}\left\{\sum_{i = 1}^n \underset{k = 1, \ldots, p}{\min}(\sqrt{\lambda_{ik}})\right\}^2 \to \infty
$$
By \eqref{eq: Remake2.3_2}, $\lim_{n \to \infty}\lambda_{\min}\{\mathbf{J}(n)\} = \infty$.
\end{proof}

\section{Values of parameters in simulation examples}\label{sm: values_examples}

\subsection{Values of parameters to generate simulations in simulation example 1}

$$
\bSigma = \begin{bmatrix}0.4 & 0.15 \\
0.15& 0.3\end{bmatrix} \; 
\bbeta = \begin{bmatrix} 1.0 & -1.0 \\
-5.0 & 2.0 \end{bmatrix} \;
\bLambda = \begin{bmatrix} 1.0 & 1.0 \\
0.0 & 2.0 \end{bmatrix}\;
\phi_1 = 6.0\;, \; \phi_2 = 18.0\\
$$

\subsection{Values of parameters to generate simulations in simulation example 2}

\[
\{\bSigma_{ii}\}_{i = 1}^{10} = (0.5, 1, 0.4, 2, 0.6, 2.5, 3.0, 0.45, 1.5, 0.5)
\]

\begin{align*}
\{\phi_k\}_{k = 1}^{50} = 
&(11.36,\; 13.43,\; 10.22,\; 6.87,\; 5.89, \;10.09, \;9.17,\; 2.75,\; 5.35,\; 3.43, \\
&4.09,\; 7.81,\; 12.52,\; 9.54,\; 5.56,\; 7.7,\; 5.44, \;7.49,\; 9.12,\; 5.2,  \\ 
&10.61,\; 5.63, \; 5.5, \; 11.65, \;4.64, \;13.16, \;9.51, \;11.77, \;8.8, \;13.43, \\
&7.89, \;11.62, \;6.4, \;12.95, \;8.48, \;2.5, \;12.95, \;13.42,\; 9.59, \;6.31, \\
&8.98, \;4.57, \;6.63, \;11.25, \;4.43, \;4.94, \;3.3, \;9.66, \;13.5, \;8.7)  
\end{align*}

$$\bbeta = 
\begin{bmatrix}
    1.0 &-1.0& 1.0& -0.5& 2.0& -1.5& 0.5& 0.3& -2.0& 1.5 \\
     -5.0& 2.0& 3.0& -2.0&-6.0 &4.0& 5.0 &-3.0& 6.0& -4.0 \\
     8.0 &6.9 &-12.0& 0.0& -4.0& 7.7& -8.8& 3.3& 6.6 &-5.5 
     \end{bmatrix}$$

\[
\begin{array}{c}
 \{\bLambda_{ij}\}_{1 \leq i \leq 10, 1 \leq j \leq 10 } =  \\
 \begin{bmatrix}
-0.38 & -0.33 &  0.23 & -0.38 & -0.13 &  0.31 &  0.28 &  0.0 &  0.42 &  0.18 \\
 -0.39 & -0.13 & -0.13 & -0.31 & -0.15 &  0.42 &  0.13 & -0.07 &  0.21 &  0.1 \\ 
  0.01 &  0.48 &  0.32 &  0.13 & -0.46 &  0.27 &  0.09 &  0.12 &  0.25 &  0.38 \\
  0.13 &  0.48 & -0.47 & -0.48 & -0.34 & -0.09 & -0.28 & -0.21 & -0.19 & 0.44 \\
 -0.47 &  0.46 &  0.24 & -0.45 &  0.44 & -0.29 & -0.36 & -0.46 &  0.34 & -0.44 \\
  0.21 &  0.12 & -0.46 &  0.29 &  0.36 &  0.17 &  0.03 &  0.2  & -0.12 & -0.23 \\
 -0.2  &  0.48 &  0.18 & -0.1 &   0.13 & -0.13 & -0.41 & -0.04 & -0.07 & -0.22 \\
 -0.19 &  0.28 &  0.47 & -0.42 &  0.17 & -0.18 & -0.03 & -0.13 & -0.04 &  0.3 \\
 -0.04 &  0.27 & -0.23 & -0.07 & -0.09 & -0.39 & -0.48 & -0.27 &  0.19 &  0.21 \\
 -0.03 & -0.18 &  -0.08 & -0.12 &  0.35 &  0.3  & -0.33 &  0.34 &  0.38 &  0.31 
 \end{bmatrix}
\end{array}
\]

\[
\begin{array}{c}
\{\bLambda_{ij}\}_{1 \leq i \leq 10, 11 \leq j \leq 20 } = \\
 \begin{bmatrix}
 -0.15 & -0.43 &  0.27 &  0.18 &  0.38 & -0.4  & -0.27 & -0.3 &  -0.38 & -0.09 \\
  0.28 & -0.24 & -0.32 & -0.08 & -0.01 & -0.31 &  0.2  &  0.31 &  0.11 & -0.38 \\
 -0.38 & -0.42 & -0.16 & -0.37 & -0.22 &  0.09 & -0.08 & -0.07 & -0.33 &  0.01 \\
 -0.42 & -0.22 &  0.44 &  0.09 &  0.25 & 0.12 &  0.1  & -0.33 & -0.41  & 0.42 \\
  0.42 &  0.48 & -0.06 &  0.07 &  0.43 &  0.12 & -0.15 &  0.29 &  0.1 &  -0.32 \\
 -0.15 & -0.03 & -0.42 &  0.01 &  0.05 &  0.33 & -0.46 &  0.12 &  0.22 & -0.44 \\
  0.28 & -0.08 & -0.41 &  0.13 &  0.03 &  0.22 &  0.08 &  0.32 &  0.02 & -0.41 \\
  0.35 & -0.39 &  0.37 & -0.47 & -0.08 & -0.01 &  0.09 &  0.06 & -0.21  & 0.38 \\
 -0.38 &  0.2 & -0.21 &  0.21 & -0.11 &  0.27 &  0.2 &  0.17 & 0.31 & -0.12 \\
  0.36 & -0.09 & -0.16 & -0.06 &  0.43 & -0.04 & -0.07 &  0.4 &  -0.39 & -0.06 
   \end{bmatrix}
\end{array}
\]

\[
\begin{array}{c}
\{\bLambda_{ij}\}_{1 \leq i \leq 10, 21 \leq j \leq 30 } = \\
 \begin{bmatrix}
  0.08 &  0.15 &  0.11 &  0.37 &  0.25 &  0.28  & 0.13 & -0.18 &  0.35 &  0.17 \\
 -0.32 & -0.31 &  0.24 & -0.29 & -0.38 & -0.1 &  -0.19  & 0.18 & -0.37 & -0.34 \\
  0.45 &  0.19 &  0.34 & -0.36 &  0.43 &  0.44 & -0.13 & -0.26 & -0.46 & -0.08 \\
  0.38 & -0.48 & -0.22 & -0.14 &  0.5  &  0.08 &  0.02 & -0.07 &  0.07 & -0.3 \\
 -0.49 & -0.48 &  0.34  & 0.1 &  -0.01 &  0.2  &  0.33 &  0.37 &  0.1 &  0.21 \\
  0.1  &  0.11 & -0.33 & -0.16 &  0.06 &  0.25 & -0.37 & -0.1  & -0.16 & -0.13 \\
  0.45 &  0.02 & -0.21 &  0.16 &  0.37 & -0.2  & -0.44 & -0.37 &  0.46 &  0.25 \\
  0.34 &  0.31 &  0.06 & -0.25 &  0.37 &  0.12 &  0.27 & -0.35 &  0.09 & -0.28 \\
 -0.2  & -0.12 & -0.41 &  0.23 & -0.23 & -0.07 & -0.34 &  0.37 & -0.43 &  0.18 \\
  0.36 &  0.14 &  0.47 &  0.3  &  0.36 & -0.09 &  0.1  & -0.01 &  0.11 &  0.43 
   \end{bmatrix}
\end{array}
\]

\[
\begin{array}{c}
\{\bLambda_{ij}\}_{1 \leq i \leq 10, 31 \leq j \leq 40 } = \\
 \begin{bmatrix}
  0.42 &  0.17 & -0.24 &  0.05 & -0.0  & -0.41 & -0.03 & -0.0  & -0.22 &  0.2 \\
  0.26 & -0.22 &  0.33 & -0.06 & -0.06 & -0.36 & -0.31 &  0.14 & -0.14 & -0.1 \\
  0.09 &  0.43 &  0.04 & -0.35 &  0.42 &  0.19 &  0.33 & -0.12 &  0.4  & -0.32 \\
 -0.13 &  0.36 &  0.02 &  0.02 &  0.34 &  0.06 & -0.32 & -0.47 &  0.02 &  0.34 \\
  0.27 & -0.35 & -0.12 &  0.5  &  0.33 &  0.33 & -0.27 &  0.39 &  0.45 &  0.27 \\
  0.38 &  0.11 &  0.05 &  0.38 & -0.34 & -0.19 & -0.12 &  0.39 &  0.2  &  0.31 \\
  0.16 &  0.31 &  0.02 & -0.43 &  0.13 &  0.33 & -0.34 & -0.1  &  0.41 & -0.46 \\
  0.32 & -0.2  & -0.18 & -0.05 &  0.2  & -0.17 & -0.06 &  0.49 & -0.06 &  0.3 \\
  0.44 & -0.05 &  0.06 & -0.22 & -0.16 & -0.43 &  0.04 & -0.23 & -0.22 &  0.11 \\
 -0.23 & -0.34 &  0.45 & -0.47 &  0.03 & -0.09 & -0.47 &  0.28 &  0.27 & -0.4 
  \end{bmatrix}
\end{array}
\]

\[
\begin{array}{c}
\{\bLambda_{ij}\}_{1 \leq i \leq 10, 41 \leq j \leq 50 } = \\
 \begin{bmatrix}
  0.28 &  0.27 & -0.45 & -0.15 &  0.05 &  0.31 &  0.16 &  0.49 &  0.12 & -0.43 \\
  0.14 & -0.16 &  0.21 & -0.3  &  0.36 & -0.29 &  0.17 &  0.16 & -0.35 & -0.05 \\
 -0.2  & -0.31 &  0.11 & -0.46 &  0.41 &  0.09 & -0.24 & -0.21 &  0.4  & -0.05 \\
  0.44 &  0.44 &  0.41 & -0.22 &  0.36 & -0.45 & -0.19 &  0.46 &  0.49 & -0.28 \\
 -0.34 & -0.5  & -0.33 & -0.37 &  0.33 & -0.31 & -0.37 &  0.05 & -0.38 & -0.14 \\
  0.33 &  0.46 & -0.35 & -0.42 & -0.01 & -0.48 & -0.33 & -0.23 & -0.07 &  0.09 \\
  0.21 & -0.49 & -0.31 & -0.04 &  0.23 &  0.43 &  0.22 &  0.23 & -0.25 & -0.45 \\
 -0.08 &  0.35 &  0.01 &  0.25 & -0.07 & -0.29 & -0.05 &  0.19 & -0.07 & -0.14 \\
 -0.4  & -0.38 &  0.07 &  0.23 &  0.43 & -0.05 &  0.08 &  0.03 &  0.09 &  0.02 \\
 -0.13 & -0.08 & -0.18 & -0.02 & -0.38 & -0.07 &  0.41 &  0.18 & -0.31 &  0.35 
 \end{bmatrix}
\end{array}
\]

\section{Jointly modeling and independent univariate modeling of outcomes}\label{SM: joint vs uni}
We offer a brief discussion on jointly modeling outcomes and independent univariate modeling of outcomes. We use the setting in our first simulation experiment in Section~\ref{subsec: sim_1} of the main manuscript. We fit a latent NNGP model using the R package \textit{spNNGP} \citep{spNNGP} for each outcome individually for the simulated data there. The priors for regression coefficients and decay were the same as that of the BLMC model. The priors for the partial sill and nugget were $\mbox{IG}(2,1)$ and $\mbox{IG}(2, 0.5)$, respectively. The maximum number of nearest neighbors was set to be $m = 10$. The posterior inference was based on an MCMC chain with 5,000 iterations after an initial burn-in of 5,000 iterations. Table~\ref{table:sim2_sm} compares the posterior inference along with performance metrics of the extended data analysis with that of the BLMC model. 

Jointly modeling all the outcomes, as discussed in Section~\ref{chp4sec: Intro} of the main manuscript, often yields better predictions compared to modeling each outcome separately. In this experiment, we observed that the BLMC model provided more precise predictions than the univariate latent NNGP model based on RMSPEs, MSELs and INTs. For example, the RMSPE for the second outcome underwent a reduction of 8.5\% in the joint model as compared to the independent model. All performance metrics for measuring the posterior inferences on latent processes favor multivariate modeling than independent univariate models in this simulation study. 

\begin{table}[!ht]
\caption{Simulation study summary table: posterior mean (2.5\%, 97.5\%) percentiles}
\centering
\begin{minipage}[t]{\textwidth} 
\centering
\scalebox{0.95}{
	\begin{tabular}{c|c|cc|cc}
	\hline\hline
			&  & \multicolumn{2}{c}{BLMC} & \multicolumn{2}{c}{univariate latent NNGP} \\
	\hline
	& true & inference & MCSE  & inference & MCSE \\
			$\bbeta_{11}$ & 1.0 & 0.705 (0.145, 1.233) & 0.034 &  
			0.764 (0.372, 1.199) & 0.025 \\
			$\bbeta_{12}$ & -1.0 &  -1.24 (-1.998, -0.529) & 0.045 & 
			-1.101 (-1.511, -0.596) & 0.027 \\
			$\bbeta_{21}$ & -5.0 & -4.945 (-5.107, -4.778) & 0.002 & 
			-4.96 (-5.133, -4.795) & 0.003 \\
			$\bbeta_{22}$ & 2.0 & 1.979 (1.78, 2.166) & 0.004 & 
			1.975 (1.777, 2.168) &  0.004  \\
			$\bSigma_{11}$ & 0.4 & 0.346 (0.283, 0.409) & 0.002 & 
			0.361 (0.303, 0.421) & 0.002 \\
			$\bSigma_{12}$ & 0.15 & 0.133 (0.072, 0.194) & 0.003 & 0.0 & -- \\
			$\bSigma_{22}$ & 0.3 & 0.29 (0.198, 0.386) & 0.004 & 0.299 (0.208,  0.392) &  0.004 \\
			$\phi_1$ & 6.0 & 8.723 (4.292, 14.065) & 0.343 & 
			9.393 (4.906, 13.976) &  0.247 \\
			$\phi_2$ & 18.0 & 22.63 (15.901, 29.555) & 0.416 & 
			14.086 (10.114, 18.366) & 0.226 \\
			\hline
			RMSPE$^a$
			&-- & [0.728, 0.756, 0.742] &  & [0.733, 0.826, 0.781] & \\
			MSEL$^b$ 
			& -- & [0.136, 0.168, 0.152] &  & [0.139, 0.172, 0.156] &  \\
			CRPS$^a$
			& -- & [-0.412, -0.423, -0.418] &  & [-0.41, -0.427, -0.418] &  \\
			CRPSL$^b$
			& -- & [-0.035, -0.038, -0.036] &  & [-0.21, -0.235, -0.222] & \\
			CVG$^a$
			&-- & [0.915, 0.955, 0.935] &  & [0.945, 0.96, 0.9525] & \\
			CVGL$^b$ 
			&-- & [0.946, 0.962, 0.954]  &  & [0.787, 0.798,0.792] &  \\
			INT$^a$
			&--& [3.378, 3.756, 3.567] &  & [3.396, 4.083, 3.739]& \\
			INTL$^b$
			& --& [0.282, 0.329, 0.305] &  &  [1.75, 1.917, 1.834] & \\
			time(s) &
			& 143 & & 139 &
			\\
			\hline\hline
	\end{tabular}
}
\footnotetext[1]{[response 1, response 2, all responses]}
\footnotetext[2]{intercept + latent process on 1000 observed locations for [response 1, response 2, all responses]} 

\footnotetext[3]{[time for MCMC sampling, time for recovering predictions]}
\end{minipage}
\label{table:sim2_sm}
\end{table}

\section{Maps of predictions for $10$ responses of the factor BLMC model in Real Data Analysis
}\label{SM: predict_map}
\begin{figure}[!ht]
     \subfloat[ NDVI \label{subfig1}]{%
       \includegraphics[width=0.3\textwidth]{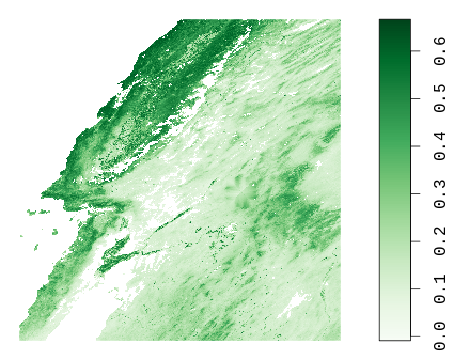}
     }
     \hfill
     \subfloat[ EVI \label{subfig3}]{%
       \includegraphics[width=0.3\textwidth]{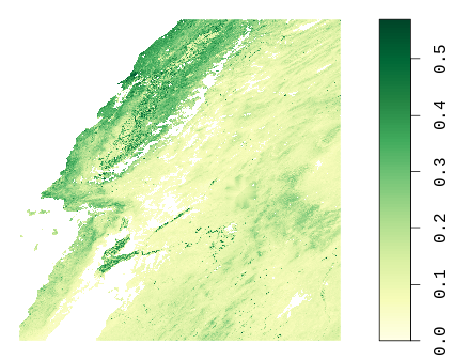}
     }
     \hfill
     \subfloat[ GPP \label{subfig5}]{%
       \includegraphics[width=0.3\textwidth]{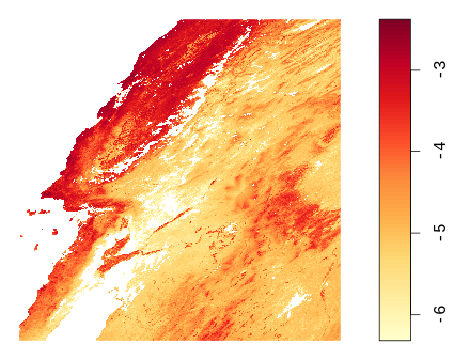}
     }
     \\
     \subfloat[ PsnNet \label{subfig7}]{%
       \includegraphics[width=0.3\textwidth]{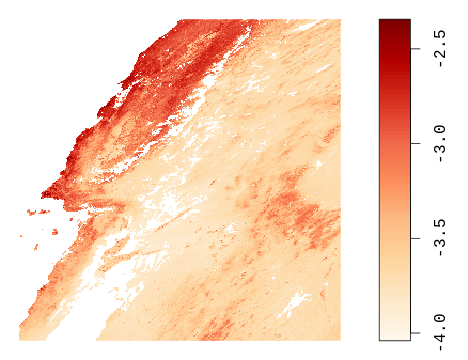}
     }
     \hfill
     \subfloat[ red refl \label{subfig9}]{%
       \includegraphics[width=0.3\textwidth]{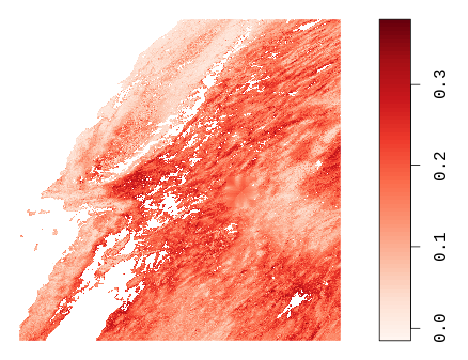}
     }
     \hfill
     \subfloat[ blue refl\label{subfig11}]{%
       \includegraphics[width=0.3\textwidth]{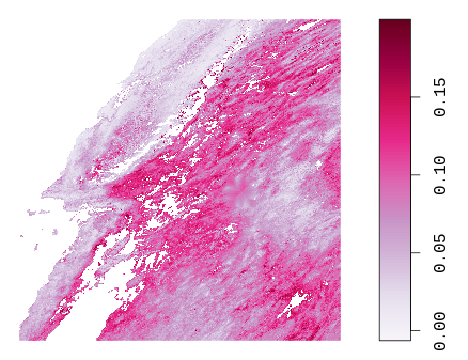}
     }\\
     \subfloat[ LE \label{subfig13}]{%
       \includegraphics[width=0.3\textwidth]{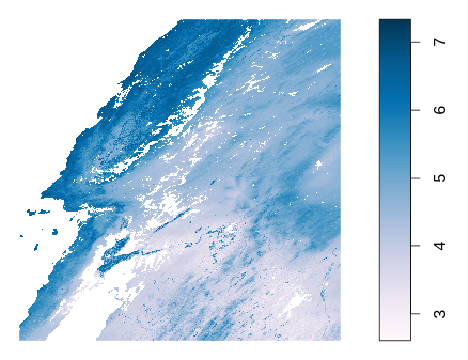}
     }
     \hfill
     \subfloat[ ET \label{subfig15}]{%
       \includegraphics[width=0.3\textwidth]{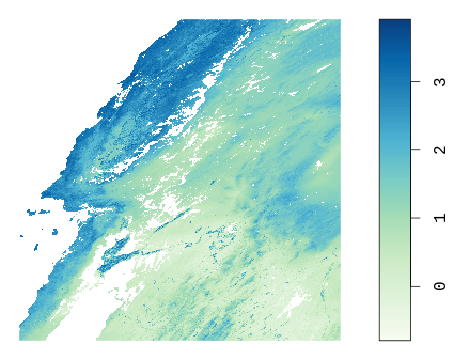}
     }
     \hfill
     \subfloat[ PLE \label{subfig17}]{%
       \includegraphics[width=0.3\textwidth]{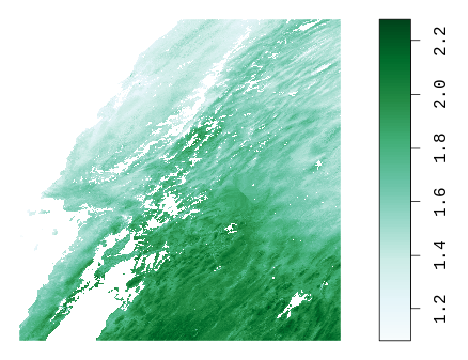}
     }\\
     \hfill
     \subfloat[ PET \label{subfig19}]{%
       \includegraphics[width=0.3\textwidth]{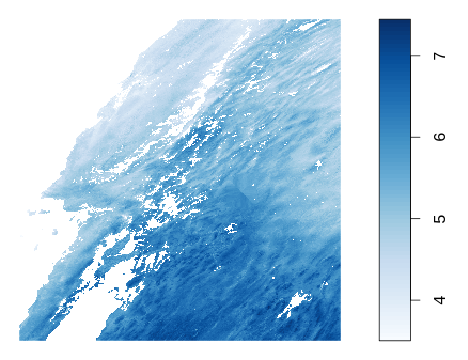}
     }
     \hfill
     \caption{Maps (a)-(j) of predicted value on $1,020,000$ observed locations for $10$ variables in Section~\ref{sec: real_data_analy}. The deeper the color, the higher the value. Some variables are transformed for better model fitting. All values are estimated by posterior mean. Each map has its own color scale. }
\label{fig:predict_maps}
\end{figure}

\label{lastpage}

\end{document}